\theoremstyle{plain}
\newtheorem{theorem}{Theorem}[section]
\newtheorem{proposition}[theorem]{Proposition}
\theoremstyle{definition}
\newtheorem{definition}[theorem]{Definition}
\newtheorem{example}[theorem]{Example}
\newcommand {\Set}[1] {\mathbb{#1}}
\newcommand{\setR}[0]{\Set{R}}
\newcommand{\setC}[0]{\Set{C}}
\newcommand{\cF}[0]{{\mathscr{F}}}
\newcommand{\slaz}[0]{{\setminus\{0\}}}
\newcommand{\cG}[0]{{\mathscr{G}}}
\newcommand {\proofBox}[0]{\hfill $\Box$ }
\newcommand {\proofread}[1]{ }
\newcommand{\pd}[2]{\frac{\partial #1}{\partial #2}}
\newcounter{saveenum}
\title{Non-dissipative electromagnetic medium with a double light cone}
\author[Dahl]{Matias F. Dahl}
\address{
Matias Dahl,
Aalto University,
Mathematics,
P.O. Box 11100,
FI-00076 Aalto,
Finland
}
\urladdr{
http://www.math.tkk.fi/\textasciitilde{}fdahl/}
\date{\today}
\begin{document}

\begin{abstract}
  We study Maxwell's equations on a 4-manifold where the
  electromagnetic medium is modelled by an antisymmetric $2\choose
  2$-tensor with real coefficients.  In this setting the \emph{Fresnel
    surface} is a fourth order polynomial surface in each cotangent
  space that acts as a generalisation of the light cone determined by
  a Lorentz metric; the Fresnel surface parameterises electromagnetic
  wave-speeds as a function of direction.  The contribution of this
  paper is a pointwise description of all electromagnetic
  medium tensors that satisfy the following conditions:
\begin{enumerate}
\item $\kappa$ is invertible,
\item $\kappa$ is skewon-free,
\item $\kappa$ is \emph{birefringent}, that is, the Fresnel surface of
$\kappa$ is the union of two distinct light cones.
\end{enumerate}
We show that there are only three classes of mediums with these
properties and give explicit expressions in local coordinates for each
class.
\end{abstract}

\maketitle

We will study the \emph{pre-metric} Maxwell's equations.  In this
setting Maxwell's equations are written on a $4$-manifold $N$ and the
electromagnetic medium is described by an antisymmetric $2\choose
2$-tensor $\kappa$ on $N$.
Then the electromagnetic medium $\kappa$ determines a fourth order
polynomial surface in each cotangent space called the \emph{Fresnel
  surface} $\cF$ and it acts as a generalisation of the light cone
determined by a Lorentz metric; the Fresnel surface parameterises
wave-speeds as a function of direction \cite{Rubilar2002, Obu:2003,  PunziEtAl:2009}.
At each point in spacetime $N$, the electromagnetic medium depends, in
general, on 36 free components. In this work we assume that the medium
is \emph{skewon-free}. Then there are only 21 free components and such
medium describe non-dissipative medium.  For example, in skewon-free
medium Poynting's theorem holds under suitable assumptions.

The above means that in the pre-metric setting we have two
descriptions of electromagnetic medium: First, we have the $2\choose
2$-tensor $\kappa$ that contains the coefficients 
in Maxwell's equations.  On the other hand, we also have the Fresnel
surface $\cF$, which describes the behaviour of a wavespeed for a
propagating electromagnetic wave. If $\kappa$ is known we can always
compute $\cF$ by an explicit equation (see equation \eqref{eq:Fr}).
A more challenging question is to understand the converse dependence,
or inverse problem: If the Fresnel surface $\cF\vert_p$ is known for
some $p\in N$, what can we say about $\kappa\vert_p$?  
Essentially this asks that if the behaviour of wave speed 
for an electromagnetic medium is known, what can we say about the
medium?
These questions are of theoretical interest, but also of practical
interest as they relate to understanding measured data in engineering
applications like traveltime tomography in anisotropic medium.
We will here only study the problem at a point $p\in N$ since the
dependence will never be unique. For example, the Fresnel surface
$\cF$ is always invariant under scalings and inversions of
$\kappa$ \cite{Obu:2003, Dahl:2011:Closure}. In general, these are not
the only invariances, and for a general $\kappa$ the relation between
$\kappa$ and $\cF$ does not seem to be very well understood.

A natural first task is to characterise those mediums $\kappa$ for
which the Fresnel surface $\cF$ is the light cone of a Lorentz metric
$g$.  This question was raised in \cite{ObukhovHehl:1999,
  ObuFukRub:00}.  A partial solution was given in \cite{ObuFukRub:00},
and (in skewon-free mediums with real coefficients) the complete solution was given in
\cite{FavaroBergamin:2011}.  The result is that if the Fresnel surface
is a light cone, then $\kappa$ is necessarily proportional to a Hodge
star operator (plus, possibly, an axion component proportional to the
identity). For an alternative proof, see \cite{Dahl:2011:Closure}, and
for related results, see \cite{ObuRub:2002,LamHeh:2004,Itin:2005}.

The main contribution of this paper is Theorem
\ref{thm:factorMedium}. It gives a pointwise characterisation of all
electromagnetic medium tensors $\kappa$ with real coefficients such
that
\begin{enumerate}
\item $\kappa$ is invertible,
\item $\kappa$ is skewon-free,
\item $\kappa$ is \emph{birefringent}, that is, the Fresnel surface of $\kappa$ is the union of two distinct
  light cones for Lorentz metrics.
\end{enumerate}
The first two assumptions imply that $\kappa$ is essentially in one-to-one
correspondence with an \emph{area metric}. Area metrics
also appear when studying the propagation of a photon in
a vacuum with a first order correction from quantum electrodynamics
\cite{DruHath:1980, Schuller:2010}.
The Einstein field equations have also been generalised
into equations where the unknown field is an area metric
\cite{PSW_JHEP:2007}.
For further examples, see \cite{PunziEtAl:2009, Schuller:2010} and
for the differential geometry of area metrics, see 
\cite{SchullerWohlfarth:2006, PSW_JHEP:2007}. 
For Maxwell's equations, the interpretation of condition \emph{(iii)}
is that differently polarised waves can propagate with different wave
speeds.  In such medium one should expect that propagation of
electromagnetic waves is determined by null-geodesics of two
Lorentz metrics. A typical example of such medium is a uniaxial
crystal
For partial results describing when the Fresnel surface factorises,
see \cite{RiveraSchuller:2011} and in 3 dimensions, see \cite{Kachalov:2004, DahlPIER:2006}.

In Theorem
\ref{thm:factorMedium} we show that there are only 
three medium classes with the above  properties and we
give explicit expressions in local coordinates for each class. 
Of these classes, the first is a generalisation of uniaxial medium
and the last seems to be unphysical; heuristic arguments suggest that
Maxwell's equations are not hyperbolic in the last class.

The main idea of the proof is as follows.
We will use the normal form theorem for area metrics
derived in \cite{Schuller:2010}, which pointwise divides area metrics
into $23$ metaclasses and gives explicit expressions in
local coordinates for each metaclass. 
This result was also used in
\cite{FavaroBergamin:2011}, and by \cite{Schuller:2010}
we only need to consider the first 7 metaclasses.
For each of these metaclasses, the Fresnel surface
can
be written as $\cF\vert_p = \{ \xi\in \setR^4 : f(\xi)=0\}$ for a 
homogeneous $4$th order polynomial $f\colon \setR^4\to \setR$ with
coefficients determined by $\kappa\vert_p$. Since
$\kappa\vert_p$ is birefringent, $f$ factorises as
\begin{eqnarray*}
   f(\xi) &=& f_+(\xi) f_-(\xi), \quad \xi\in \setR^4
\end{eqnarray*}
into homogeneous $2$nd order polynomials $f_\pm \colon \setR^4\to
\setR$. By identifying coefficients we obtain a system of polynomial
equations in coefficients of $f$ and $f_\pm$. In the last step we
eliminate the coefficients in $f_\pm$ from these equations whence we
obtain constraints on $f$ (and hence on $\kappa$) that much be
satisfied when $\kappa$ is birefringent. To eliminate variables we use
the technique of \emph{Gr\"obner bases}, which was also used in
\cite{Dahl:2011:Closure}.

A limitation of Theorem \ref{thm:factorMedium} is that the explicit
expression is only valid at a point. The reason for this is that the
decomposition in \cite{Schuller:2010} essentially relies on the Jordan
normal form theorem for matrices, which is unstable under
perturbations. 
Another limitation is that we do not allow for complex coefficients in
$\kappa$. Therefore mediums like \emph{chiral medium} are not included
in the mediums in Theorem \ref{thm:factorMedium}.

This paper relies on computations by computer algebra. For information
about the Mathematica notebooks for these computations, please see the
author's homepage.

\section{Maxwell's equations}
\label{mainSec}
By a \emph{manifold} $M$ we mean a second countable topological Hausdorff
space that is locally homeomorphic to $\setR^n$ with $C^\infty$-smooth
transition maps. All objects are assumed to be smooth and real where defined.  
Let $TM$ and $T^\ast M$ be the tangent and cotangent bundles,
respectively.
For $k\ge 1$, let $\Lambda^k(M)$ be the set of
antisymmetric $k$-covectors, so that $\Lambda^1(N)=T^\ast N$.  
%Also, for $k\ge 1$ let $\Lambda_k(M)$ be the set of antisymmetric $k$-vectors.
Also, let $\Omega^k_l(M)$ be $k\choose l$-tensors that are
antisymmetric in their $k$ upper indices and $l$ lower indices. In
particular, let $\Omega^k(M)$ be the set of $k$-forms.
%Let also $\vfield{M}$ be the set of vector fields, and
 Let $C^\infty(M)$ be the set of functions. 
%By $\Omega^k(M)\times \setR$ we denote the set of $k$-forms that depend smoothly on a parameter $t\in \setR$.
%
%By $T(M,\setC)$, $T^\ast(M,\setC)$, $\Lambda^k(M,\setC)$,
%$\Omega^k_l(M,\setC)$ and $\vfield{M,\setC}$ we denote the
%complexification of the above spaces where component may also
%take complex values. 
%Smooth complex valued functions are denoted
%by $C^\infty(M,\setC)$.
%
%At a point $p\in M$ we denote by $\Lambda^k_p(M)$ and $\Lambda^k_p(M, \setC)$ the
%set of alternating $k$-forms at $p$ with real and complex coefficients, respectively.
The Einstein summing convention is used throughout. When writing tensors 
in local coordinates we assume that the components satisfy the same symmetries as
the tensor. 

\subsection{Maxwell's equations on a $4$-manifold}
\label{sec:MaxOn4}
On a $4$-manifold $N$, Maxwell's equations read 
\begin{eqnarray}
\label{max4A}
dF &=& 0, \\
\label{max4B}
dG &=& j,
\end{eqnarray}
where $d$ is the exterior derivative on $N$, $F,G\in \Omega^2 (N)$ are
the electromagnetic field variables, and $j\in \Omega^3(N)$ is the
source term.
%Hereafter, we assume that $j=0$, that is,
%we only consider the \emph{sourceless} Maxwell's equations.
%
By an \emph{electromagnetic medium} on $N$ 
we mean a map 
\begin{eqnarray*}
   \kappa \colon \Omega^2(N) &\to& \Omega^2(N).
\end{eqnarray*} 
We then say that $2$-forms $F,G\in \Omega^2(N)$ \emph{solve Maxwell's
  equations in medium $\kappa$} if  $F$ and $G$ satisfy equations
\eqref{max4A}--\eqref{max4B} and
\begin{eqnarray}
\label{FGchi}
  G &=& \kappa(F).
\end{eqnarray}
Equation \eqref{FGchi} is known as the \emph{constitutive equation}.
If $\kappa$ is invertible, %equation \eqref{FGchi}
it follows that one can eliminate half of the free variables in
Maxwell's equations \eqref{max4A}--\eqref{max4B}.  We assume that
$\kappa$ is linear and determined pointwise so that we can represent
$\kappa$ by an antisymmetric $2\choose 2$-tensor $\kappa \in
\Omega^2_2(N)$. If in coordinates $\{x^i\}_{i=0}^3$ for $N$ we have
\begin{eqnarray}
\label{eq:kappaLocal}
  \kappa &=& \frac 1 2 \kappa^{ij}_{lm} dx^l\otimes dx^m\otimes \pd{}{x^i}\otimes \pd{}{x^j}
\end{eqnarray}
and $F = F_{ij} dx^i \otimes dx^j$ and $G = G_{ij} dx^i \otimes dx^j$, 
then constitutive equation \eqref{FGchi} reads
\begin{eqnarray}
 \label{FGeq_loc}
   G_{ij} &=& \frac 1 2 \kappa_{ij}^{rs} F_{rs}.
\end{eqnarray}
Then at each point on $N$, a general antisymmetric $2\choose 2$-tensor
$\kappa$ depends on $36$ free real components.
In the main result of this paper (Theorem \ref{thm:factorMedium}) we
will assume that $\kappa$ is \emph{skewon-free}, that is,
$\kappa(u)\wedge v= u\wedge \kappa(v)$ for all $u,v\in \Omega^2_2(N)$
whence $\kappa$ has only 21 free components. Physically, such medium
describe non-dissipative medium; if $\kappa$ is time-independent, then
Poynting's theorem holds under suitable assumptions \cite[Proposition
3.3]{Dahl:2009}.  Let us also note that if $N$ is orientable, then
invertible skewon-free mediums are essentially in a one-to-one
correspondence with area metrics. See \cite{Schuller:2010} and
\cite[Proposition 2.4]{Dahl:2011:Conjugation}. The medium is called
\emph{axion-free} if $\operatorname{trace} \kappa = 0$ \cite{Obu:2003}.

%
%\subsection{Double light cone} 
%\label{sec:Hodge}
By a \emph{pseudo-Riemann metric} on a manifold $M$ we mean a
symmetric $0\choose 2$-tensor $g$ that is non-degenerate. If $M$ is
not connected we also assume that $g$ has constant signature. 
%If $g$ is positive definite we say that $g$ is a \emph{Riemann metric}. 
By a \emph{Lorentz metric} we mean a pseudo-Riemann metric on a
$4$-manifold with signature $(-+++)$ or $(+---)$.
%
%By $\sharp$ and $\flat$ we denote the isomorphisms
%$\sharp\colon T^\ast M\to TM$ and $\flat\colon TM\to T^\ast M$. By
%$\setR$-linearity we extend $g$, $\sharp$ and $\flat$ to complex
%arguments. Moreover, we extend $g$ also to covectors by setting
%$g(\xi,\eta)=g(\xi^\sharp,\eta^\sharp)$ when $\xi,\eta\in
%\Lambda^1_p(N,\setC)$.
%%
The \emph{light cone} of a Lorentz metric is defined as 
\begin{eqnarray*}
  N(g) &=& \{\xi \in T^\ast_p(N) : g(\xi,\xi) = 0\}.
\end{eqnarray*}
For $p\in N$ we define $N_p(g) = N(g)\cap T^\ast_p(N)$.

%By $\sharp$
%and $\flat$ we denote the isomorphisms $\sharp\colon T^\ast M\to TM$
%and $\flat\colon TM\to T^\ast M$. By $\setR$-linearity we extend $g$,
%$\sharp$ and $\flat$ to complex arguments. Moreover, we extend $g$
%also to covectors by setting $g(\xi,\eta)=g(\xi^\sharp,\eta^\sharp)$
%when $\xi,\eta\in \Lambda^1_p(N,\setC)$.
%%
If $g$ is a pseudo-Riemann metric on a
orientable $4$-manifold $N$, then the Hodge operator of $g$ induces a skewon-free
$2\choose 2$-tensor that we denote by $\kappa=\ast_g$. Moreover, if locally
$g=g_{ij}dx^i\otimes dx^j$, and $\kappa$ is
written as in equation \eqref{eq:kappaLocal}, then
\begin{eqnarray}
\label{eq:hodgeKappaLocal}
\kappa^{ij}_{rs} &=& \sqrt{\vert g\vert} g^{ia}g^{jb} \varepsilon_{abrs},
\end{eqnarray}
where $\det g= \det g_{ij} $, $g^{ij}$ is the
$ij$th entry of $(g_{ij})^{-1}$, and $\varepsilon_{l_1\cdots l_n}$ is
the \emph{Levi-Civita permutation symbol}. We treat
$\varepsilon_{l_1\cdots l_n}$ as a purely combinatorial object (and
not as a tensor density). Let also $\varepsilon^{l_1\cdots l_4}=
\varepsilon_{l_1\cdots l_4}$.

\subsection{Representing $\kappa$ as a $6\times 6$ matrix}
\label{sec:Rep6x6}
Let $O$ be the ordered set of index pairs $\{ 01, 02, 03$, $23, 31, 12\}$. 
If $I\in O$, we  denote the individual indices by $I_1$ and $I_2$. Say,
if $I=31$ then $I_2=1$. 

If $\{x^i\}_{i=0}^3$ are local coordinates for a $4$-manifold $N$, and
$J\in O$ we define $dx^J = dx^{J_1}\wedge dx^{J_2}$.  
%For $\setK=\setR$ and $\setK=\setC$, then 
A basis for $\Omega^2(N)$
is given by $\{ dx^{J}: J\in O\}$, that is,
\begin{eqnarray}
\label{eq:2basis}
\{ dx^0\wedge dx^1, dx^0\wedge dx^2, dx^0\wedge dx^3, dx^2\wedge dx^3, dx^3\wedge dx^1, dx^1\wedge dx^2\}.
\end{eqnarray}
This choice of basis follows \cite[Section A.1.10]{Obu:2003} and
\cite{FavaroBergamin:2011}.
If $\kappa\in \Omega^2_2(N)$ is written as in equation
\eqref{eq:kappaLocal} and $J\in O$, then
\begin{eqnarray}
\label{eq:kappaMatDef}
\kappa(dx^{J}) = \sum_{I\in O} \kappa^J_I dx^I, \quad J \in O,
\end{eqnarray}
where $\kappa^J_I = \kappa^{J_1 J_2}_{I_1 I_2}$. We will always use
capital letters $I,J,K,\ldots$ to denote elements in $O$.  Let $b$ be
the natural bijection $b\colon O\to \{1,\ldots, 6\}$.  Then we
identify coefficients $\{\kappa^J_I : I,J\in O\}$ for $\kappa$ with the $6\times 6$
matrix $A=(\kappa^J_I)_{IJ}$ defined as $\kappa^J_I = A_{b(I) b(J)}$
for $I,J\in O$ \cite{Dahl:2011:Conjugation}.

\subsection{Fresnel surface}
Let %$\kappa$ be a $2\choose 2$-tensor 
$\kappa\in \Omega^2_2(N)$ on a $4$-manifold $N$. 
If $\kappa$ is locally given by equation \eqref{eq:kappaLocal} in coordinates $\{x^i\}_{i=0}^3$, let 
\begin{eqnarray*}
\cG^{ijkl}_0 &=& \frac 1 {48} 
\kappa^{a_1 a_2}_{b_1 b_2} 
\kappa^{a_3 i}_{b_3 b_4} 
\kappa^{a_4 j}_{b_5 b_6} 
\varepsilon^{b_1 b_2 b_5 k} 
\varepsilon^{b_3 b_4 b_6 l} 
\varepsilon_{a_1 a_2 a_3 a_4}.
\end{eqnarray*}
In overlapping coordinates $\{\widetilde x^i\}_{i=0}^3$, these coefficients
transform as
\begin{eqnarray}
\label{eq:TRtrans}
\widetilde \cG_0^{ijkl} &=& \det \left(\pd{x^r}{\widetilde x^s}\right)\, \cG_0^{abcd} \pd{\widetilde x^i}{x^a} \pd{\widetilde x^j}{x^b}\pd{\widetilde x^k}{x^c}\pd{\widetilde x^l}{x^d},
\end{eqnarray}
and components $\cG^{ijkl}_0$ define a tensor density $\cG_0$ on $N$
of weight $1$. The \emph{Tamm-Rubilar tensor density} is the symmetric
part of $\cG_0$ and we denote this tensor density by $\cG$
\cite{Rubilar2002, Obu:2003, Dahl:2011:Closure}. In coordinates,
$\cG^{ijkl} = \cG^{(ijkl)}_0$, where parenthesis indicate that indices
$ijkl$ are symmetrised with scaling $1/4!$.  Using tensor density
$\cG$, the \emph{Fresnel surface} at a point $p\in N$ is defined as
\begin{eqnarray}
\label{eq:Fr}
\cF\vert_p &=& \{\xi\in \Lambda^1_p(N) : \cG^{ijkl} \xi_i \xi_j \xi_k \xi_l  = 0\}.
\end{eqnarray}
%and we denote by  $F$ be the disjoint union $F=\coprod_{p\in N} F_p$. 
The Fresnel surface is a fundamental object when studying wave
propagation in Maxwell's equations.  
It is clear that in each cotangent space, the Fresnel surface
$\cF\vert_p$ is a fourth order polynomial surface, so it can have 
multiple sheets and singular points. 
There are various ways to derive
the Fresnel surface; by studying a propagating weak singularity
\cite{ObuFukRub:00, Rubilar2002, Obu:2003}, using a geometric optics  \cite{Itin:2009,
  Dahl:2011:Closure}, or as the characteristic polynomial of the full Maxwell's equations
\cite{Schuller:2010}. Classically, the Fresnel surface can be seen as
the dispersion equation for a medium, so that it constrains possible
wave speed(s) as a function of direction. 

If 
%\begin{eqnarray}
%\label{eq:isotropic}
$\kappa = f\ast_g$
%\end{eqnarray} 
for a Lorentz metric $g$ and a non-zero function $f\in C^\infty(N)$,
then the Fresnel surface is the light cone of $g$. The converse is also
true (assuming that $\kappa$ is skewon-free and axion-free)
\cite{Obu:2003, FavaroBergamin:2011, Dahl:2011:Closure}. The medium
given by $\kappa = f\ast_g$ is known as \emph{non-birefringent
  medium}. For such medium propagation speed does not depend on
polarisation.

%In this paper we will study mediums $\kappa$ for which the Fresnel
%surface is the union of two light cones.
 
\begin{definition}
\label{def:bire}
Suppose $N$ is a $4$-manifold, $\kappa\in \Omega^2_2(N)$ and
$\mathscr{F}$ is the Fresnel surface for $\kappa$. If $p\in N$ we say
that $\mathscr{F}\vert_p$ \emph{decomposes into a double light cone} if
there exists Lorentz metrics $g_+$ and $g_-$ defined in a
neighbourhood of $p$ such that
\begin{eqnarray}
\label{eq:cFdecomp}
\cF \vert_p &=& N_p(g_+)\,\, \cup \,\, N_p(g_-)
%\{\xi \in \Lambda^1_p(N) : g(\xi,\xi) = 0\} \cup \{\xi \in \Lambda^1_p(N) : h(\xi,\xi) = 0\} 
\end{eqnarray}
and $N_p(g_+)\neq N_p(g_-)$. We then also say that $\kappa\vert_P$ is \emph{birefringent}.
\end{definition}

We know that two Lorentz metrics are conformally related if and only
if they have the same light cones \cite{Ehrlich:1991}. The condition
$N_p(g_+)\neq N_p(g_-)$ thus only exclude non-birefringent mediums,
which for skewon-free mediums are well understood (see above).
When $\cF\vert_p$ decompose into a double light cone as in Definition
\ref{def:bire} a physical interpretation is that the medium is
\emph{birefringent}. That is, differently polarised electromagnetic
waves can propagate with different wave speeds.  Common examples of
such mediums are \emph{uniaxial crystals} like calcite
 \cite[Section 15.3]{BornWolf}.

To prove of the next four propositions we will need some terminology
from algebraic geometry. If $k=\setR$ or $k=\setC$, we denote by
$k[x_1, \ldots, x_n]$ the ring of polynomials $k^n\to k$ in variables
$x_1, \ldots, x_n$. Moreover, a non-constant polynomial $f\in k[x_1,
\ldots, x_n]$ is \emph{irreducible} if $f=uv$ for $u,v \in k[x_1,
\ldots, x_n]$ implies that $u$ or $v$ is a constant.  For a polynomial
$r\in k[x_1, \ldots, x_n]$, let $V(r)=\{x\in k^n : r(x)=0\}$ be the
\emph{variety} induced by $r$, and let $\langle r \rangle=\{ fr : f\in
k[x_1, \ldots, x_n]\}$ be the the \emph{ideal generated by $r$}.  For
what follows the necessary theory for manipulating these objects can,
for example, be found in \cite{CoxLittleOShea:1992}.

The next proposition can be seen as a reformulation of the \emph{Brill
  equations} that characterise when a homogeneous polynomial factorises
into linear forms \cite{Briand:2010}.

\begin{proposition} 
\label{lemma:irr}
Suppose $Q\in \setC^{4\times 4}$ is a symmetric non-zero matrix and
$f\in \setC[\xi_0, \ldots, \xi_3]$ is the polynomial $f(\xi) =
\xi^t\cdot Q\cdot \xi$ for $\xi=(\xi_0, \ldots, \xi_3) \in \setC^4$.
Then $f$ is irreducible in $\setC[\xi_0, \ldots, \xi_3]$ if and only
if $\operatorname{adj} Q \neq 0$.
\end{proposition}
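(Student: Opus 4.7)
The plan is to translate both sides of the equivalence into statements about $\operatorname{rank} Q$ and then reduce $f$ to a sum of squares by a linear change of variables. First I would observe that the entries of $\operatorname{adj} Q$ are (up to sign) the $3\times 3$ minors of $Q$, so $\operatorname{adj} Q\ne 0$ if and only if $\operatorname{rank} Q \ge 3$. The proposition therefore reduces to: $f$ is irreducible in $\setC[\xi_0,\ldots,\xi_3]$ exactly when $\operatorname{rank} Q \ge 3$.

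Since $Q$ is complex symmetric, there exists $P\in GL_4(\setC)$ with $P^t Q P = \operatorname{diag}(I_r,0)$, where $r=\operatorname{rank} Q$. The substitution $\xi=Py$ is a graded ring automorphism of $\setC[\xi_0,\ldots,\xi_3]$, hence preserves (ir)reducibility, so I may assume $f(y) = y_0^2+\cdots+y_{r-1}^2$. The reducible direction is then immediate: if $r=1$ then $f=y_0^2$, and if $r=2$ then $f=(y_0+iy_1)(y_0-iy_1)$, so in both cases $f$ factors nontrivially.

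For the converse, assume $r\ge 3$ and, for contradiction, that $f=uv$ with $u,v$ non-constant. Because $\setC[\xi_0,\ldots,\xi_3]$ is a graded integral domain, factors of a homogeneous polynomial are homogeneous, so $u$ and $v$ must both be linear forms. If $u$ and $v$ were proportional, $f$ would be a scalar multiple of the square of a single linear form and so $\operatorname{rank} Q \le 1$, contradicting $r\ge 3$. The remaining case, with $u,v$ linearly independent, is the main obstacle, and I would handle it by a singular-locus argument. The linear subspace $V(u)\cap V(v)\subseteq \setC^4$ has dimension $2$, and from $\nabla f = u\,\nabla v + v\,\nabla u$ every point of this subspace is a zero of $\nabla f$. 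But $\nabla f(\xi)=2Q\xi$, so the zero locus of $\nabla f$ equals $\ker Q$, which has dimension $4-r\le 1$ and cannot contain a $2$-plane. This contradiction forces $f$ to be irreducible, completing the equivalence.
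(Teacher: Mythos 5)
Your proof is correct, and it is genuinely different from the one in the paper. The paper's argument first reduces to the case where any factorisation of $f$ is into linear forms, then invokes a result of Briand on the Gaeta covariant $G_f$ (a cubic invariant built from $f$ and its directional derivatives) to characterise exactly when a quadratic form is a product of linear forms or has a multiple factor, and finally verifies by computer algebra that $G_f=0$ if and only if $\operatorname{adj} Q = 0$. You instead translate $\operatorname{adj} Q\ne 0$ into $\operatorname{rank} Q\ge 3$, use the congruence classification of symmetric bilinear forms over $\setC$ to normalise $f$ to a sum of $r=\operatorname{rank} Q$ squares, handle $r\le 2$ by an explicit factorisation, and rule out $r\ge 3$ via a singular-locus argument: a nontrivial factorisation $f=uv$ into independent linear forms forces the $2$-plane $V(u)\cap V(v)$ to lie in the zero set of $\nabla f=2Q\xi$, i.e.\ inside $\ker Q$, which has dimension at most $1$. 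Your route is more elementary, avoids the appeal to an external invariant-theoretic result, and eliminates the computer-algebra step entirely; the paper's route is shorter on the page because it delegates the heavy lifting to a cited covariant identity. Both are complete; yours could replace the paper's proof self-containedly.
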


In Proposition \ref{lemma:irr}, $\operatorname{adj} Q$ is the
\emph{adjugate matrix} of all cofactor expansions of $Q$,
and $\xi^t$ is the matrix transpose. 
% such that for
%invertible $Q$ we have $Q^{-1} = \frac 1 {\det Q} \operatorname{adj}
%Q$.

\begin{proof}
  The result follows from the following three facts: First, if $f =
  uv$ for polynomials $u,v\in \setC[\xi_0, \ldots, \xi_3]$ then $u$
  and $v$ are linear.  (To see this, we know that $Q(0)=0$, so we may
  assume that $u(0)=0$.  For a contradiction, suppose that $v(0)\neq
  0$. Then $df\vert_0=0$ implies that $du\vert_0=0$, but then $u=0$,
  so $Q=0$.)  Second, by \cite[Example 2] {Briand:2010} polynomial $f$
  is a product of linear forms or $f$ has a multiple factor if and
  only if $G_f(\xi, \eta, \zeta)=0$ for all $\xi,\eta,\zeta\in
  \setC^4$. Here $G_f$ is the \emph{Gaeta covariant} defined as
\begin{eqnarray*}
  G_f(\xi, \eta, \zeta) &=& -\frac 1 2  \det \begin{pmatrix}
  2f(\xi) &  (Df)_\xi(\eta) &  (Df)_\xi(\zeta) \\
   (Df)_\xi(\eta) & 2f(\eta) &  (Df)_\eta(\zeta) \\
   (Df)_\xi(\zeta) & (Df)_\eta(\zeta) & 2f(\zeta) \\
\end{pmatrix},
\end{eqnarray*}
and $(Df)_a(b)=\frac {d}{dt}(f(a+tb))\vert_{t=0}$ is the directional
derivative.  Third, by computer algebra we have $\operatorname{adj} Q
=0$ if and only if $G_f=0$.
\end{proof}

%\subsection{Null planes}%
%\label{sec:nullPlanes}
In \cite{montaldi:2006} it is proven that the light cone of a Lorentz
metric can not contain a vector subspace of dimension $\ge 2$. The
next proposition generalise this result to double light cones. In the
proof of Theorem \ref{thm:factorMedium} we will use this result to
show that medium tensors in the last 16 metaclasses in the
classification of \cite{Schuller:2010} can not decompose into a double
light cone. In \cite{Schuller:2010} this property was used to show
that these last metaclasses are neither hyperbolic, so in these
metaclasses, Maxwell's equations are not well-posed.

\begin{proposition}
\label{prop:noTwoDimSubspace} 
Suppose $g_\pm$ are Lorentz metrics on a $4$-manifold $N$.  If
$\Gamma\subset T_pN$ is a non-empty vector subspace such that
$\Gamma\subset N(g_+)\cup N(g_-)$, then $\operatorname{dim} \Gamma\le
1$.
\end{proposition}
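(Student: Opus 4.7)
The plan is to argue by contradiction, reducing to the single-light-cone case already covered by the cited result of \cite{montaldi:2006}. Suppose, for contradiction, that $\dim \Gamma \ge 2$. Define on $\Gamma$ the two quadratic polynomials
\begin{eqnarray*}
q_\pm(v) &=& g_\pm(v,v), \quad v\in \Gamma.
\end{eqnarray*}
By hypothesis every $v\in \Gamma$ lies in $N(g_+)\cup N(g_-)$, so the product polynomial $q_+\, q_-$ vanishes identically on $\Gamma$.

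Next I would pass from a set-theoretic statement to a polynomial identity. Choose a basis of $\Gamma\cong \setR^k$ so that $q_+\, q_-$ becomes a polynomial in $k$ real variables that is identically zero. Since a real vector space is Zariski-dense in its complexification, $q_+\, q_-$ is in fact the zero element of the polynomial ring $\setC[y_1,\ldots,y_k]$. This ring is an integral domain, hence $q_+=0$ or $q_-=0$ on $\Gamma$; equivalently, $\Gamma\subset N(g_+)$ or $\Gamma\subset N(g_-)$.

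The final step is to apply the result of \cite{montaldi:2006} cited immediately before the proposition: the light cone of a Lorentz metric contains no vector subspace of dimension $\ge 2$. This contradicts $\dim \Gamma\ge 2$, completing the proof. (For completeness, the Montaldi fact itself follows from the standard observation that if $u,v$ are linearly independent null vectors with $g(u+tv,u+tv)=0$ for all $t\in \setR$, then $g(u,v)=0$ as well, so $\operatorname{span}\{u,v\}$ is a totally isotropic plane, which is impossible in Lorentzian signature where the maximal totally isotropic subspace has dimension $1$.)

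I do not expect any real obstacle: the only slightly delicate point is the passage from "vanishes on every real point of $\Gamma$" to "is the zero polynomial", which is precisely the density of $\setR^k$ in $\setC^k$ for the Zariski topology. Everything else is the integral-domain property of $\setC[y_1,\ldots,y_k]$ together with the already-cited fact about single Lorentz light cones.
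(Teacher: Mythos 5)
Your proof is correct, and it takes a genuinely different route from the paper's. Both arguments reduce to the single-cone case (the Montaldi fact cited just before the proposition), but the reductions are different. The paper works with a path $\theta \mapsto \cos\theta\, u + \sin\theta\, v$, splits into cases according to whether this path meets $N(g_-)$, and in the nontrivial case differentiates $\Vert\cos\theta\, u + \sin\theta\, v\Vert^2 = 0$ on an interval and invokes a Wronskian to conclude $\operatorname{span}\{u,v\}\subset N(g_+)$. You instead observe that the hypothesis says the product $q_+q_-$ of the two restricted quadratic forms vanishes identically on $\Gamma$, hence is the zero polynomial, and then use that $\setR[y_1,\ldots,y_k]$ (or its complexification) is an integral domain to conclude $q_+\equiv 0$ or $q_-\equiv 0$, i.e.\ $\Gamma\subset N(g_+)$ or $\Gamma\subset N(g_-)$. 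Your argument is purely algebraic, shorter, and sidesteps the case split and the differentiation; it also makes it transparent that the statement is really about a vector space lying in a union of two quadric hypersurfaces, so it generalizes immediately to higher dimensions and signatures once the single-quadric input is supplied. The paper's analytic version does essentially the same reduction, but in a more hands-on, coordinate-level way. One small remark: passing to $\setC$ is not needed for the integral-domain step, since $\setR[y_1,\ldots,y_k]$ is already an integral domain and a real polynomial vanishing on all of $\setR^k$ is the zero polynomial; but this is a stylistic point, not an error.
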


\begin{proof}
We may assume that $\dim \Gamma\ge 1$.
  Let us first prove the result in the special case that $g_+$ and
  $g_-$ are conformally related (after \cite[Proposition
  2]{montaldi:2006}).  Let $\{x^i\}_{i=0}^3$ be coordinates around $p$
  such that $g_+\vert_p = \pm \operatorname{diag}(-1,1,1,1)$, whence
  we can identify $T_pM$ and $\setR \oplus \setR^3$.  Let $\pi$ be the
  Cartesian projection onto the first component.  Then the restriction
  $\pi\vert_\Gamma\colon \Gamma\to \setR$ satisfies
  $\operatorname{ker} \pi\vert_\Gamma=\{0\}$ and $\dim
  \operatorname{range} \pi\vert_\Gamma=1$, and the claim follows.

  For general $g_+$ and $g_-$ let us show that $\dim \Gamma\ge 2$ 
leads to a 
contradiction. If $\dim \Gamma\ge 2$, we can find linearly independent
  $u,v\in \Gamma$ such that  $\operatorname{span}\{u,v\}\subset N(g_+)\cup N(g_-)$.
  We may further assume that $u\in N(g_+)$. Let
%\begin{eqnarray*}
%    \alpha u + \beta v &\in& N(g_+)\cup N(g_-) \quad \mbox{for all}\, \alpha, \beta\in \setR.
%\end{eqnarray*}
\begin{eqnarray*}
    U &=& \{ \theta\in \setR : \cos\theta u + \sin \theta v \not\in N(g_-) \}.
%    V &=& \{ \theta\in \setR : \cos\theta u + \sin \theta v \in N(g_+) \}.
\end{eqnarray*}
For $w\in T^\ast N$ let us write $\Vert w\Vert^2 = g_+(w,w)$.  
%It is
%clear that $U\subset V$ so $V$ contains the open set $U$. 
If $U$ is
empty, then $\operatorname{span}\{u,v\} \subset N(g_-)$ and the result
follows from the special case.  Otherwise there exists a $\theta_0\in
U$ so that $ \Vert \cos\theta u + \sin \theta v \Vert^2 =0$ for all
$\theta$ in some neighbourhood $I_0\ni \theta_0$. Differentiating
gives
\begin{eqnarray*}
\frac 1 2 \left( \Vert v\Vert^2 - \Vert u\Vert^2 \right)\,\cdot \,\sin 2\theta 
+  g_+(u,v)\, \cdot \, \cos 2\theta  &=& 0, \quad \theta\in I_0.
\end{eqnarray*}
By computing the Wronskian of $\sin 2\theta, \cos 2\theta$, it follows
that $0=\Vert u\Vert^2 = \Vert v\Vert^2$ and
$g_+(u,v)=0$. Thus $\operatorname{span}\{u,v\}\subset N(g_+)$, 
but this contradicts the special case.
%\cite[Proposition  2]{montaldi:2006}.
\end{proof}

The next proposition gives the pointwise description of the
Tamm-Rubilar tensor density at points $p\in N$ where the the Fresnel
surface decomposes into a double light cone. Let us emphasize that the
result is pointwise. For example, in equation \eqref{prop:GdecompIntoGpm} the two
sides have different transformation rules.

\begin{proposition}
\label{thm:localDescGc}
Suppose $N$ is a $4$-manifold, $\kappa\in \Omega^2_2(N)$, and the
Fresnel surface of $\kappa$ decomposes into a double light cone at
$p\in N$. If $\{x^i\}_{i=0}^3$ are coordinates around $p$ and
$\cG^{ijkl}$ %are components of the Tamm-Rubilar tensor density for $\kappa$,
and $g_\pm = g_{\pm\, ij} dx^i\otimes dx^j$ are as in Definition
\ref{def:bire}, then
\begin{eqnarray}
\label{prop:GdecompIntoGpm}
  \cG^{ijkl} \xi_i \xi_j \xi_k \xi_l &=& C\,\, (g_+^{ij} \xi_i \xi_j)\,\,(g_-^{kl} \xi_k \xi_l)\,\, \mbox{ at} \,\,p, \quad \{\xi_i\}_{i=0}^3\in \setR^4,
\end{eqnarray}
%at $p$ 
for some $C\in \setR\slaz$.
\end{proposition}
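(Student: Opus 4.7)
The plan is to work in $\setC[\xi_0,\ldots,\xi_3]$ with the quartic $f(\xi)=\cG^{ijkl}\xi_i\xi_j\xi_k\xi_l$ and the two quadratics $q_\pm(\xi)=g_\pm^{ij}\xi_i\xi_j$, and to show that each $q_\pm$ divides $f$; then degree counting forces $f=Cq_+q_-$ for a scalar $C$. The hypothesis $\cF\vert_p=N_p(g_+)\cup N_p(g_-)$ gives the real-variety inclusions $V_\setR(q_\pm)\subset V_\setR(f)$, which is the starting point.

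First I would check that $q_+$ and $q_-$ are irreducible in $\setC[\xi_0,\ldots,\xi_3]$. For each $\pm$, the symmetric matrix $(g_\pm^{ij})$ is the inverse of the Lorentz matrix $(g_{\pm\,ij})$, hence invertible, so its adjugate is nonzero. Proposition \ref{lemma:irr} then gives irreducibility. Since $N_p(g_+)\neq N_p(g_-)$ and the zero set of an irreducible quadratic determines it up to a scalar, the polynomials $q_+$ and $q_-$ are not proportional; being irreducibles of the same degree, they are coprime in $\setC[\xi_0,\ldots,\xi_3]$.

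Next I would upgrade the real-vanishing $f\vert_{V_\setR(q_\pm)}=0$ to divisibility in $\setC[\xi_0,\ldots,\xi_3]$. The concrete step here is that $V_\setR(q_\pm)$ is Zariski dense in the irreducible complex hypersurface $V_\setC(q_\pm)$; this can be verified by a linear change of coordinates bringing $q_\pm$ into the standard diagonal Lorentz form $-\xi_0^2+\xi_1^2+\xi_2^2+\xi_3^2$ and observing that the real cone projects surjectively onto $\setR^3$. Consequently $f$ vanishes on $V_\setC(q_\pm)$, and the Nullstellensatz together with the primality of $\langle q_\pm\rangle$ (an irreducible in a polynomial ring generates a prime radical ideal) yields $q_\pm\mid f$ in $\setC[\xi_0,\ldots,\xi_3]$. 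By coprimality, $q_+q_-\mid f$, and since $\deg f=4=\deg(q_+q_-)$ the quotient is a constant $C$. Because $f$, $q_+$ and $q_-$ have real coefficients, $C\in\setR$.

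Finally, $C\neq 0$: if $C=0$ then $f\equiv 0$ as a polynomial and $\cF\vert_p=T^\ast_pN$, but $N_p(g_+)\cup N_p(g_-)$ is a proper subset of $T^\ast_pN$ since any timelike covector for $g_+$ is excluded, contradicting the decomposition in Definition \ref{def:bire}. I expect the main obstacle to be the real-to-complex passage, i.e.\ justifying that real vanishing on $V_\setR(q_\pm)$ already forces divisibility by $q_\pm$; everything else (irreducibility, coprimality, degree count, reality and nonvanishing of $C$) is then routine.
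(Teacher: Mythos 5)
Your proposal is correct and follows essentially the same route as the paper: irreducibility of the quadratics via Proposition \ref{lemma:irr}, divisibility of the Fresnel quartic by $q_+q_-$ (you via coprimality, the paper via primality and radical ideals---equivalent packagings), and a degree count to pin down the constant. The one place you are actually \emph{more} careful is the passage from real data to a complex polynomial identity: the paper silently upgrades the real-variety equality \eqref{eq:cFdecomp} to an equality of complex varieties $V(\langle\gamma\rangle)=V(I_+\cap I_-)$ before invoking the Strong Nullstellensatz, whereas you flag this as the main obstacle and supply a justification (Zariski density of the real Lorentz cone in the complex quadric, checkable after diagonalising). Two small remarks: your claim that the real zero set of an irreducible quadratic determines it up to scalar should be read in light of the paper's cited fact that Lorentz metrics with the same light cone are conformal \cite{Ehrlich:1991} (it is not true for arbitrary real quadratics); and your $C\neq 0$ argument is slightly out of order---you need $f\not\equiv 0$ before asserting $\deg f=4$, so the observation that $N_p(g_+)\cup N_p(g_-)$ is a proper subset of $T_p^\ast N$ should come first, after which $C\neq 0$ is automatic. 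The paper instead appeals to Proposition \ref{prop:noTwoDimSubspace} to produce a real $\xi$ off the union of cones, which simultaneously gives $C$ real and nonzero.
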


\begin{proof}
  Let $f_\pm, \gamma\colon \setC^4\to \setC$ be polynomials
  $f_\pm(\xi) = g_{\pm}^{ij} \xi_i \xi_j$ and $\gamma(\xi)= \cG^{ijkl}
  \xi_i \xi_j \xi_k \xi_l$ for $\xi=(\xi_{i})_{i=0}^3\in \setC^4$.
For ideals $I_\pm = \langle f_\pm \rangle$ we then have $\langle f_+
f_-\rangle = I_+ \cap I_-$ whence equation \eqref{eq:cFdecomp} implies
that $V(\langle \gamma\rangle )%=V(\langle f_+ f_- \rangle) 
= V(I_+\cap I_-)$ and
passing to ideals gives
\begin{eqnarray*}
I(V (\langle \gamma \rangle)) &=& I(V(I_+\cap I_-)).
\end{eqnarray*}
The Strong Nullstellensatz implies that \cite[p. 175]{CoxLittleOShea:1992}
\begin{eqnarray*}
\langle \gamma \rangle &\subset &  \sqrt{I_+}\cap \sqrt{I_-},
\end{eqnarray*}
where $\sqrt{I}$ be the \emph{radical} of an ideal $I$ and we used
identities $\sqrt{I\cap J} = \sqrt{I}\cap \sqrt{J}$ and $I\subset
\sqrt{I}$ valid for any ideals $I$ and $J$
 \cite[Proposition 16 in Section 4.3]{CoxLittleOShea:1992}. 
By Proposition \ref{lemma:irr}, $f_\pm$ are irreducible polynomials, 
so $I_\pm$ are prime ideals and $I_\pm = \sqrt{I_\pm}$. Thus
\begin{eqnarray*}
\langle \gamma \rangle &\subset &  \langle f_+ f_-\rangle
\end{eqnarray*}
and  $\gamma =  p f_+  f_-$  for some  polynomial $p\colon  \setC^4\to
\setC$.  Computing  the polynomial degree  of both sides  implies that
$p$   is  a   non-zero  constant   $p=C\in  \setC$.    By  Proposition
\ref{prop:noTwoDimSubspace}, we  can find a $\xi \in  \setR^4$ so that
$\xi\not\in V(f_+)\cup V(f_-) = V(\gamma)$ whence $C\in \setR\slaz$.
\end{proof}

For two Lorentz metrics $g$ and $h$ we know that their light cones
$N(g)$ and $N(h)$ coincide if and only if $g$ and $h$ are conformally
related \cite{Ehrlich:1991}. The next proposition gives an analogous uniqueness
result for Fresnel surfaces that decompose into a double light
cone. 

\begin{proposition}
\label{thm:DoubleObservability}
Suppose $N$ is a $4$-manifold, $\kappa\in \Omega^2_2(N)$ and the
Fresnel surface of $\kappa$ decomposes into a double light cone at
$p\in N$. Suppose furthermore that equation \eqref{eq:cFdecomp} holds
for Lorentz metrics $g_\pm$ and for Lorentz metrics $h_\pm$.  Then
there exists constants $C_\pm \in \setR\slaz$ such that exactly one of
the following conditions hold:
%\begin{enumerate}
%\item 
$g_\pm =C_\pm \, h_\pm$ at $p$ or
%\item 
$g_\mp =C_\pm \, h_\pm$ at $p$.
%\end{enumerate}
\end{proposition}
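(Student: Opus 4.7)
The plan is to reduce the statement to unique factorisation in the polynomial ring $\setC[\xi_0,\ldots,\xi_3]$, which is a unique factorisation domain. Define the polynomials $f_\pm(\xi) = g_\pm^{ij}\xi_i\xi_j$ and $\widetilde f_\pm(\xi) = h_\pm^{ij}\xi_i\xi_j$. Since $g_\pm$ and $h_\pm$ are Lorentz metrics, each of the inverse metrics $g_\pm^{ij}$ and $h_\pm^{ij}$ is non-degenerate at $p$, so $\operatorname{adj}$ of each coefficient matrix is non-zero. By Proposition \ref{lemma:irr} the four quadratic forms $f_\pm, \widetilde f_\pm$ are therefore irreducible in $\setC[\xi_0,\ldots,\xi_3]$.

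Applying Proposition \ref{thm:localDescGc} to the two given decompositions yields constants $C,\widetilde C\in \setR\slaz$ with
\begin{eqnarray*}
C\, f_+(\xi)\, f_-(\xi) &=& \cG^{ijkl}\xi_i\xi_j\xi_k\xi_l \;=\; \widetilde C\, \widetilde f_+(\xi)\, \widetilde f_-(\xi)
\end{eqnarray*}
as polynomials in $\setC[\xi_0,\ldots,\xi_3]$. Since this ring is a UFD and all four factors are irreducible, unique factorisation forces $\{\widetilde f_+, \widetilde f_-\} = \{\lambda_+ f_+, \lambda_- f_-\}$ or $\{\widetilde f_+, \widetilde f_-\} = \{\lambda_+ f_-, \lambda_- f_+\}$ for some $\lambda_\pm \in \setC\slaz$. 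Comparing real coefficients of the respective monomials shows the $\lambda_\pm$ must actually lie in $\setR\slaz$. Translated back into the metrics, this gives $h_\pm^{ij} = \lambda_\pm g_\pm^{ij}$ at $p$ (or with the $\pm$ swapped on one side), and inverting the matrices yields $g_\pm = C_\pm h_\pm$ at $p$ with $C_\pm \in \setR\slaz$ (or the swapped version).

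To see that exactly one of the two alternatives holds, suppose for contradiction that both held simultaneously. Then $h_+$ would be conformally related to both $g_+$ and $g_-$, so $g_+$ and $g_-$ would be conformally related at $p$. By the classical fact (cited from \cite{Ehrlich:1991}) that conformally related Lorentz metrics share the same light cone, this would give $N_p(g_+) = N_p(g_-)$, contradicting the requirement in Definition \ref{def:bire} that $N_p(g_+)\neq N_p(g_-)$.

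The main technical point to watch is the passage from factorisation in the complex polynomial ring to a real scalar; this is routine once one observes that each $f_\pm$ has real coefficients and therefore any complex scalar multiplying it to produce another real polynomial must itself be real. Otherwise the argument is essentially a clean application of unique factorisation, made possible by the irreducibility statement of Proposition \ref{lemma:irr}.
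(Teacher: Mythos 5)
Your proof is correct and takes essentially the same approach as the paper, which cites Propositions \ref{lemma:irr} and \ref{thm:localDescGc} together with unique factorisation into irreducible factors. Your write-up simply fills in the details left implicit in the paper's one-line proof, including the reality of the scalars and the use of the $N_p(g_+)\neq N_p(g_-)$ condition to obtain ``exactly one.''
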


\begin{proof} 
  The result follows by Propositions \ref{lemma:irr} and
  \ref{thm:localDescGc} and since any polynomial has a unique
  decomposition into irreducible factors \cite[Theorem 5 in Section
  3.5]{CoxLittleOShea:1992}.
\end{proof}

The next example shows that unique decomposition is not true when
$\cF$ only decompose into second order surfaces.

\begin{example}
In
coordinates $\{x^i\}_{i=0}^3$ for $\setR^4$, let $\kappa$ be the skewon-free 
$2\choose 2$-tensor determined by the 
$6\times 6$ matrix
%\begin{eqnarray*}
  $(\kappa_{I}^J)_{IJ} =\operatorname{diag}(-1,1,0,-1,1,0)$.
%\end{eqnarray*}
Then $\kappa$ has Fresnel surface
\begin{eqnarray*}
 \cF &=& \{ \xi\in T^\ast \setR^4 : \xi_0 \xi_1 \xi_2 \xi_3 = 0\}.
\end{eqnarray*}
It is clear that $\cF$ has multiple factorisations into quadratic
forms, and by Proposition \ref{prop:noTwoDimSubspace}, $\cF$ does
not factorise into a double light cone.  \proofBox
\end{example}

%\section{Normal form theorem for skewon-free $2\choose 2$-tensors}
\newcommand{\rr}[0]{\alpha}
\newcommand{\pos}[0]{\beta}
\newcommand{\sqr}[0]{\sqrt{2}}

\section{Non-dissipative media with a double light cone }

Theorem \ref{thm:factorMedium} below is the main result of this
paper. To formulate the theorem we first need some terminology.  Suppose $L \colon V\to V$ is a linear map where $V$ is a
$n$-dimensional real vector space.  If the matrix representation of
$L$ in some basis is $A\in \setR^{n\times n}$ and $A$ is written using
the Jordan normal form
%with real blocks, (see
%for example Appendix B in \cite{Dahl:2011:Conjugation}) 
we say that
$L$ has \emph{Segre type} $\left[m_1\cdots m_r\,
  k_{1}\overline{k_{1}}\cdots k_{s}\overline{k_{s}}\right]$ when the
blocks corresponding to real eigenvalues have dimensions $m_1\le
\cdots \le m_r$ and the blocks corresponding to complex eigenvalues
have dimensions $2k_1\le \cdots \le 2k_s$.
Moreover, by uniqueness of the Jordan normal form, the Segre type
depends only on $L$ and not on the basis. 
For a $2\choose 2$-tensor $\kappa$ on a $4$-manifold, we define the
Segre type of $\kappa\vert_p$ as the Segre type of the linear map
$\Omega^2(N)\vert_p\to \Omega^2(N)\vert_p$ in basis \eqref{eq:2basis}.
By counting how many ways a $6\times 6$ matrix can be
decomposed into Jordan normal forms, it follows that there are only
$23$ Segre types for a $2\choose 2$-tensor. A main result of
\cite{Schuller:2010} are simple normal forms in local coordinates for
each of these Segre types under the assumption that $\kappa\vert_p$ is
skewon-free and invertible. In the below proof we will use the
restatement of this result in \cite{Dahl:2011:Conjugation}.

In the proof of Theorem \ref{thm:factorMedium} we will eliminate
variables in systems of polynomial equations. 
%
%If $f,g\in \setC[x]$
%are polynomials, they have a common root if and only if the
%\emph{resultant} of $f$ and $g$ is identically zero. We will not give
%the expression for the resultant, but it can be written as the
%determinant of the \emph{Sylvester matrix} which is a square matrix
%with dimension $\operatorname{deg}(fg)\times\operatorname{deg}(fg)$
%and with coefficients determined by the coefficients in $f$ and $g$
% \cite[Proposition 8 in Section 3.5]{CoxLittleOShea:1992}.
%More generally, suppose $V\subset \setC^N$ is the solution set to
%polynomial equations $f_1=0, \ldots, f_M=0$ where $f_i\in \setC[x_1,
%\ldots, x_N]$. To eliminate variable $x_N$, we would like to find
%equations that are satisfied for all $(x_1, \ldots, x_{N-1})$ when
%$(x_1, \ldots, x_N)\in V$. One way to derive such equations is to
%computing the pairwise resultant of all polynomials $f_i$ that depend
%on $x_N$. This will lead to a new system of equations $g_1=0, \ldots,
%g_K=0$ in variables $x_1, \ldots, x_{N-1}$ with the sought property.
%We will not use this technique to eliminate variables. However, it
%shows that equations $g_1=0, \ldots, g_K=0$ will be polynomial
%equations. 
Suppose $V\subset \setC^n$ is the solution set to polynomial equations
$f_1=0, \ldots, f_N=0$ where $f_i\in \setC[x_1, \ldots, x_n]$.  
If $I$ is
the ideal generated by $f_1, \ldots, f_N$, the \emph{elimination
  ideals} are the polynomial ideals defined as
\begin{eqnarray*}
\label{eq:elimId}
  I_k&=& I\cap \setC[x_{k+1}, \ldots, x_{n}],\quad  k\in \{0,\ldots, n-1\}.
\end{eqnarray*} 
Thus, if $(x_1, \ldots, x_n)\in V$ then $p(x_{k+1}, \ldots, x_n)=0$
for any $p\in I_k$, and $I_k$ contain polynomial consequences of the
original equations that only depend on variables $x_{k+1}, \ldots,
x_n$.  Using Gr\"obner basis, one can explicitly compute $I_k$
\cite[Theorem 2 in Section 3.1]{CoxLittleOShea:1992}.  In the proof
of Theorem \ref{thm:factorMedium}, we will use the built-in
Mathematica routine \textsf{GroebnerBasis} for such computations.

\begin{theorem}
\label{thm:factorMedium}
Suppose $N$ is a $4$-manifold and  $\kappa\in \Omega^2_2(N)$. Furthermore, 
suppose that at some $p\in N$ 
\begin{enumerate}
\item $\kappa\vert_p$ has no skewon component, 
%only a principal part (that is, $\operatorname{trace} \kappa = 0$%
%and $\kappa\vert_p$ satisfies equation \eqref{xxx}).
%\item invertible?
\item $\kappa\vert_p$ is invertible as a linear map $\Lambda^2_p(N)\to \Lambda^2_p(N)$,
\item the Fresnel surface $\cF\vert_p$ factorises into a double light cone.
\end{enumerate}
Then $\kappa\vert_p$ must have Segre type 
  $[1\overline{1}\,1\overline{1}\,1\overline{1}]$, 
$[2\overline{2}\,1\overline{1}]$ or
$[11\,1\overline{1}\,1\overline{1}]$.
\begin{enumerate}
\item \textbf{Metaclass I:} If $\kappa\vert_p$ has Segre type
  $[1\overline{1}\,1\overline{1}\,1\overline{1}]$, there are
  coordinates $\{x^i\}_{i=0}^3$ around $p$ such that 
\begin{eqnarray*}
(\kappa_I^J)_{IJ} &=& 
\begin{pmatrix}
\alpha_1 & 0 & 0 & -\pos_1 & 0 &0 \\
0 & \alpha_2 & 0 & 0  &-\pos_2 & 0 \\
0 & 0 &  \alpha_3 & 0&0 &-\pos_3  \\
\pos_1 & 0 &  0 & \alpha_1 & 0 &0  \\
0 & \pos_2 & 0 & 0  &\alpha_2 & 0 \\
0 & 0 & \pos_3 & 0 & 0 &\alpha_3 
\end{pmatrix}.
\end{eqnarray*}
for some for some $\alpha_1, \alpha_2, \alpha_3 \in \setR$ and
$\beta_1, \beta_2,\beta_3>0$.  For $i\in \{1,2,3\}$ let
\begin{eqnarray}
\label{eq:MetaclassI-D-def}
 D_{i} &=& \frac{ (\alpha_{i'}-\alpha_{i''})^2+\beta_{i'}^2+\beta_{i''}^2}{\beta_{i'} \beta_{i''}}
\end{eqnarray}
where $i'$ and $i''$ are defined such that $\{i,i',i''\}=\{1,2,3\}$ and $i'<i''$.

Then, for exactly one $i\in \{1,2,3\}$ we have
\begin{eqnarray}
\label{eq:mc-I-Dsol}
   D_i=2,  \quad D_{i'} = D_{i''}
\end{eqnarray}
and equation \eqref{eq:cFdecomp} holds for Lorentz metrics 
\begin{eqnarray*}
  g_\pm & =& \operatorname{diag}\left(1, 
\frac 1 2 \left( -D_{1} \pm \sqrt{D_{1}^2-4}\right),
\frac 1 2 \left( -D_{2} \pm \sqrt{D_{2}^2-4}\right),
\frac 1 2 \left( -D_{3} \pm \sqrt{D_{3}^2-4}\right)\right)^{-1}.
\end{eqnarray*}

\item \textbf{Metaclass II:}
If $\kappa\vert_p$ has Segre type $[2\overline{2}\,1\overline{1}]$, 
If $\kappa\vert_p$ is in Metaclass II, there are coordinates $\{x^i\}_{i=0}^3$ around $p$ such that 
\begin{eqnarray*}
(\kappa_I^J)_{IJ} &=& 
\begin{pmatrix}
\rr_1         &       -\pos_1       &    0 & 0 & 0 & 0 \\
\pos_1      &    \rr_1           &    0&  0  & 0 & 0 \\
0              &                0    &  \rr_2 & 0 &0 &-\pos_2 \\
0              &                1    & 0 & \rr_1 & \pos_1 &0  \\
1              &                0    & 0 & -\pos_1  &\rr_1 & 0 \\
0              &                0    & \pos_2 & 0 & 0 &\rr_2
\end{pmatrix}
\end{eqnarray*}
where $\rr_1,\rr_2\in \setR$ and $\beta_1>0$. Then $\alpha_1=\alpha_2$
and $\beta_1=\beta_2$, and equation \eqref{eq:cFdecomp} holds for Lorentz metrics 
\begin{eqnarray*}
g_\pm = \left(
\begin{array}{cccc}
\pm 1 & 0 & 0 & \beta_1  \\
0 & -\beta_1  & 0 & 0 \\
0 & 0 & -\beta_1  & 0 \\
\beta_1  & 0 & 0 & 0
\end{array}
\right)^{-1}.
\end{eqnarray*}

\item 
\textbf{Metaclass IV:}
If $\kappa\vert_p$ has Segre type $[11\,1\overline{1}\,1\overline{1}]$, 
there are coordinates $\{x^i\}_{i=0}^3$ around $p$ such that  
\begin{eqnarray*}
(\kappa_I^J)_{IJ} &=& 
\begin{pmatrix}
\rr_1 & 0 & 0 & -\pos_1 & 0 &0 \\
0 & \rr_2 & 0 & 0  &-\pos_2 & 0 \\
0 & 0 &  \rr_3 & 0 & 0 &\rr_4  \\
\pos_1 & 0 &  0 & \rr_1 & 0 &0  \\
0 & \pos_2 & 0 & 0  &\rr_2 & 0 \\
0 & 0 & \rr_4 & 0 & 0 &\rr_3
\end{pmatrix}
\end{eqnarray*}
for some $\alpha_1, \alpha_2, \alpha_3, \alpha_4 \in \setR$ and $\beta_1, \beta_2>0$. 
Then $\alpha_1 = \alpha_2$, $\beta_1 = \beta_2$, $\alpha_4\neq 0$,
$\alpha_3^2\neq \alpha_4^2$
and 
equation \eqref{eq:cFdecomp} holds for Lorentz metrics 
\begin{eqnarray*}
  g_\pm & =& \operatorname{diag}\left(
1, 
\frac 1 2 \left( -D_{1} \pm \sqrt{D_{1}^2+4}\right),
\frac 1 2 \left( -D_{1} \pm \sqrt{D_{1}^2+4}\right),
-1\right)^{-1},
\end{eqnarray*}
where
\begin{eqnarray}
\label{eq:D1defIV}
   D_{1} &=& \frac{ (\alpha_{2}-\alpha_{3})^2+\beta_{2}^2-\alpha_4^2}{\beta_{2}\alpha_4}.
\end{eqnarray}
\end{enumerate} 
\end{theorem}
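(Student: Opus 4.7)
The plan is to invoke the normal form theorem for skewon-free invertible area metrics from \cite{Schuller:2010}, as restated in \cite{Dahl:2011:Conjugation}, which under hypotheses \emph{(i)} and \emph{(ii)} gives an explicit local coordinate representative of $\kappa\vert_p$ for each of the $23$ possible Segre types. By \cite{Schuller:2010}, the Fresnel polynomial of $\kappa\vert_p$ in metaclasses $\text{VIII}$ through $\text{XXIII}$ vanishes identically on some vector subspace of $T^\ast_pN$ of dimension $\ge 2$. Proposition \ref{prop:noTwoDimSubspace} then rules out these $16$ metaclasses under hypothesis \emph{(iii)}, and it remains to analyse the first $7$ metaclasses one by one.

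For each of the remaining metaclasses I substitute the normal form into the definition of $\cG^{ijkl}_0$, symmetrise to get $\cG^{ijkl}$, and extract the quartic $f(\xi)=\cG^{ijkl}\xi_i\xi_j\xi_k\xi_l$, which is a homogeneous polynomial of degree $4$ with coefficients that are polynomials in the medium parameters $\alpha_i,\beta_j$. By hypothesis \emph{(iii)} together with Proposition \ref{thm:localDescGc}, one has $f(\xi)=C\,f_+(\xi)f_-(\xi)$ for non-degenerate Lorentzian quadratic forms $f_\pm(\xi)=g_\pm^{ij}\xi_i\xi_j$; absorbing $C$ into $f_-$ I may take $C=1$ and treat the symmetric matrices $Q_\pm$ representing $f_\pm$ as $20$ indeterminate coefficients. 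Matching the $35$ monomials of degree $4$ on both sides yields a polynomial system in the unknown entries of $Q_\pm$ and the medium parameters.

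The next step is to eliminate the $Q_\pm$ variables. Using a lexicographic Gr\"obner basis in \textsf{GroebnerBasis}, with the entries of $Q_\pm$ ranked above the medium parameters, I compute the elimination ideal $I_{20}\subset \setR[\alpha_i,\beta_j]$ of purely medium constraints implied by birefringence. For metaclasses $\mathrm{III}$, $\mathrm{V}$, $\mathrm{VI}$, $\mathrm{VII}$ the resulting generators either force some combination that makes $\kappa\vert_p$ non-invertible or collapse the factorisation to a single repeated quadratic, which by Proposition \ref{prop:noTwoDimSubspace} cannot correspond to a Lorentzian light cone; these four metaclasses are therefore excluded. For metaclasses $\mathrm{I}$, $\mathrm{II}$, $\mathrm{IV}$ the Gr\"obner basis reduces to precisely the algebraic conditions stated in the theorem: equation \eqref{eq:mc-I-Dsol} in metaclass $\mathrm{I}$, the identifications $\alpha_1=\alpha_2$, $\beta_1=\beta_2$ in metaclass $\mathrm{II}$, and the same identifications together with $\alpha_4\ne 0$ and $\alpha_3^2\ne \alpha_4^2$ in metaclass $\mathrm{IV}$ (the last two inequalities being needed so that the factors of $f$ are distinct and non-degenerate).

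Once the constraints are in hand, the explicit Lorentz metrics $g_\pm$ are recovered by factoring $f$ over the reals as a product of two quadratics; the roots of the quadratic-in-a-single-variable resolvent give the diagonal entries in the $g_\pm$ expressions, which is where the square-root discriminants in equations \eqref{eq:MetaclassI-D-def} and \eqref{eq:D1defIV} originate. Uniqueness of the recovered $g_\pm$ up to swap and rescaling is then Proposition \ref{thm:DoubleObservability}. The main obstacle I anticipate is the elimination computation: although the system is finite, its size and the need to interpret each generator of the Gr\"obner basis (separating ``genuine'' birefringence constraints from spurious ones that merely assert non-invertibility or non-reality of $g_\pm$) requires careful bookkeeping per metaclass. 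A secondary subtlety is verifying that the recovered $g_\pm$ indeed have Lorentz signature; in metaclass $\mathrm{I}$ this is equivalent to $D_i^2-4\ge 0$ for the selected index, and in metaclass $\mathrm{IV}$ to the sign conditions on $\alpha_3,\alpha_4$ noted above.
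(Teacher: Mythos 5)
Your high-level strategy matches the paper's: use the Schuller/Witte/Wohlfarth normal form theorem (as restated in \cite{Dahl:2011:Conjugation}) to reduce to $23$ pointwise metaclasses, kill metaclasses VIII--XXIII via Proposition~\ref{prop:noTwoDimSubspace}, and for the remaining seven decompose the Fresnel quartic into a product of quadratics and eliminate the quadratic unknowns by a Gr\"obner basis. However, your account of how metaclasses~III, V, VI and VII are excluded is not correct, and this is where the real work in the proof lies.

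You claim that for these four metaclasses the elimination ideal ``either forces some combination that makes $\kappa\vert_p$ non-invertible or collapses the factorisation to a single repeated quadratic, which by Proposition~\ref{prop:noTwoDimSubspace} cannot correspond to a Lorentzian light cone.'' Neither mechanism is what actually happens. In metaclasses~III and~V the Gr\"obner basis forces $\beta_1=0$ (resp.\ $\beta_1$ purely imaginary), which simply contradicts the parameter constraint $\beta_1>0$ baked into the normal form --- it has nothing to do with invertibility. More seriously, in metaclasses~VI and~VII the elimination ideal does \emph{not} produce a contradiction in the medium parameters alone: it leaves whole families of candidate factorisations $f=f_+f_-$ over $\setC$. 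The exclusion in those cases requires a further signature analysis of the recovered quadratic forms. In metaclass~VI one computes the diagonal matrices $g_\pm$ explicitly and checks that $g_+$ never has Lorentz signature for any value of the free parameter. In metaclass~VII the argument is still more delicate: one must split into the cases $D_0=0$ and $D_0\neq 0$, write down the candidate matrices $L_\pm$, invoke Proposition~\ref{lemma:irr} together with unique factorisation to conclude $L_\pm$ would have to be real with Lorentz signature, hence $\det L_\pm<0$, and then show that the explicit formulas force $\det L_\pm\geq 0$. None of this is captured by ``Proposition~\ref{prop:noTwoDimSubspace} rules out a repeated quadratic''; that proposition concerns vector subspaces contained in the light-cone union, and the paper instead rules out the repeated-quadratic degeneracy (which only ever threatens in metaclass~I) via Proposition~\ref{thm:localDescGc} and uniqueness of factorisation into irreducibles.

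Two smaller inaccuracies: (a) your statement that Lorentz signature in metaclass~I ``is equivalent to $D_i^2-4\geq 0$ for the selected index'' is off --- the definition \eqref{eq:MetaclassI-D-def} gives $D_i\geq 2$ automatically for every $i$, and what matters is the signature pattern coming from which roots of $x^2+D_jx+1$ are taken; (b) in metaclass~IV no extra sign conditions on $\alpha_3,\alpha_4$ are needed for the Lorentz signature, since $\frac12(-D_1\pm\sqrt{D_1^2+4})$ always has one positive and one negative root, so $g_\pm$ are automatically Lorentzian; the conditions $\alpha_4\neq 0$ and $\alpha_3^2\neq\alpha_4^2$ come respectively from Proposition~\ref{prop:noTwoDimSubspace} and from $\det\kappa\vert_p\neq 0$, not from signature.
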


%%%
%%%
%%%   PROOF
%%%
%%%
\begin{proof}
\textbf{Metaclass I.} 
The local expression for $\kappa\vert_p$ is given by \cite[Theorem 3.2]{Dahl:2011:Closure}.
 Then the Tamm-Rubilar tensor density for $\kappa\vert_p$ satisfies
\begin{eqnarray}
\label{eq:fresnel:metaI}
 C^{-1} \, \cG^{ijkl} \xi_i \xi_j \xi_k \xi_l 
 &=& \xi_0^4 + \xi_1^4 + \xi_2^4 + \xi_3^4 
- D_0 \xi_0 \xi_1 \xi_2 \xi_3 \\
& & \quad\quad\quad\quad\quad%\quad\quad\quad\quad
\nonumber
+\sum_{i=1}^3 D_i (\xi_{i'}^2 \xi_{i''}^2 - \xi_0^2 \xi_i^2), \quad\quad\xi\in \setR^4,
\end{eqnarray}
where $C= {\beta_1\beta_2\beta_3}$ and $D_0$ is given explicitly in terms of $\alpha_1, \ldots, \beta_3$,
and implicitly $D_0$ satisfies 
\begin{eqnarray}
\label{eq:MetaclassI_D0}
 D_0^2 = 4\left( 4+ D_1 D_2 D_3 - D_1^2- D_2^2- D_3^2\right).
\end{eqnarray}
By Proposition \ref{thm:localDescGc}, there are real symmetric
matrices $A=(A^{ij})_{i,j=0}^3$ and $B=(B^{ij})_{i,j=0}^3$ such that
\begin{eqnarray} 
\label{eq:metaclassI-fDecomp}
C^{-1} \, \cG^{ijkl} \xi_i \xi_j \xi_k \xi_l  &=& \left( \xi^t\cdot A\cdot \xi\right)\,\,\, \left( \xi^t\cdot B\cdot \xi\right), \quad \xi\in \setR^4.
\end{eqnarray}
Writing out these equations shows that $A^{00} B^{00}=1$. Hence
$A^{00}$ is non-zero, and by rescaling $A$ and $B$, we may assume that
$A^{00}=1$. This substitution simplifies the equations so that by
polynomial substitutions we may eliminate all variables in $B$ and
variable $D_0$. This results in a system of polynomial equations that
only involve $D_1, D_2, D_3$ and the variables in $A$. Further
eliminating the variables in $A$ using a Gr\"obner basis, gives
constraints on $D_1, D_2, D_3$.  By equation
\eqref{eq:MetaclassI-D-def} we know that $D_1, D_2, D_3\ge 2$, whence
these constraints imply that there exists a unique $i\in \{1,2,3\}$
such that conditions \eqref{eq:mc-I-Dsol} hold.  (To see that $i$ is
unique it suffices to note that if $D_i=D_j=2$ for $i\neq j$ then $D_1
= D_2 = D_3 = 2$ whence $D_0=0$ and equation
\eqref{eq:metaclassI-fDecomp} holds for the single light cone
$A=B=g_0^{-1}$ with $g_0 = \operatorname{diag}\{-1,1,1,1\}$. By
Proposition \ref{thm:localDescGc} and unique decomposition into
irreducible factors this gives a contradiction.)  Equations
\eqref{eq:mc-I-Dsol} and \eqref{eq:MetaclassI_D0} imply that $D_0=0$.
% and since $D_1,D_2,D_3\ge 2$ it follows that $g_\pm$ are real
%and have Lorentz signature. 
The result follows since equation
\eqref{eq:metaclassI-fDecomp} holds with $A=g_+^{-1}$ and $B=g_-^{-1}$,
where $g_\pm$ are the matrices in the theorem formulation.
 
\textbf{Metaclass II.} As in Metaclass I, there are matrices
$A$ and $B$ such that the Fresnel polynomial satisfies equation
\eqref{eq:metaclassI-fDecomp} (with $C = 1$). As in Metaclass I we can
eliminate variables in $B$. Further eliminating all variables in $A$
by a Gr\"obner basis implies that $\alpha_1=\alpha_2$ and $\beta_1 =
\beta_2$. Then a a direct computation shows that equation
\eqref{eq:metaclassI-fDecomp} holds with $A=g_+^{-1}$, $B=
g_-^{-1}$ and $C=\beta_1$.  Computer algebra shows that $g_\pm$ have Lorentz
signatures.

\textbf{Metaclass III.} 
If $\kappa\vert_p$ is in Metaclass III, 
there are coordinates $\{x^i\}_{i=0}^3$ around $p$ such that 
\begin{eqnarray*}
(\kappa_I^J)_{IJ} &=& 
\begin{pmatrix}
\rr_1          &           -\pos_1   &    0   & 0 & 0 & 0 \\
\pos_1         &            \rr_1    &    0   &  0  & 0 & 0 \\
1              &                0    &  \rr_1 & 0 &0 &-\pos_1 \\
0              &                0    &    0   & \rr_1   & \pos_1 &1  \\
0              &                0    &    1   & -\pos_1 &\rr_1 & 0 \\
0              &                1    & \pos_1 &   0     & 0 &\rr_1
\end{pmatrix}
\end{eqnarray*}
where $\rr_1\in \setR$ and $\beta_1>0$. Decomposing the Fresnel
polynomial as in equation \eqref{eq:metaclassI-fDecomp} (with $C =
1$) gives a system of polynomial equations for the variables in $A$,
$B$ and $\kappa\vert_p$. Computing the Gr\"obner basis for these equations
implies that $\beta_1=0$. Thus $\kappa\vert_p$ can not be in Metaclass III.

\textbf{Metaclass IV.}
Let us first note that $\alpha_4 \neq 0$ since otherwise
$\operatorname{span} \{ dx^1\vert_p, dx^2\vert_p\}\subset \cF\vert_p$ which
is not possible by Proposition \ref{prop:noTwoDimSubspace}. 
Then the Tamm-Rubilar tensor density satisfies
\begin{eqnarray*}
C^{-1} \, \cG^{ijkl} \xi_i \xi_j \xi_k \xi_l 
 &=& \xi_0^4-\xi_1^4-\xi_2^4+\xi_3^4 + D_0 \xi_0 \xi_1 \xi_2 \xi_3 \\
 & & \quad \quad + D_{1} (\xi_2^2 \xi_3^2- \xi_0^2 \xi_1^2)
+ D_{2} (\xi_1^2 \xi_3^2- \xi_0^2 \xi_2^2)
+ D_{3} (-\xi_1^2 \xi_2^2- \xi_0^2 \xi_3^2), 
\end{eqnarray*}
where $C = \beta_1 \beta_2 \alpha_4$, $D_0$ is determined
explicitly in terms of $\alpha_1, \ldots, \beta_2$, 
$D_1$ is defined in equation \eqref{eq:D1defIV},  
$D_{3}\ge 2$ is defined in equation \eqref{eq:MetaclassI-D-def} and
\begin{eqnarray*}
   D_{2} &=& \frac{ (\alpha_{1}-\alpha_{3})^2+\beta_{1}^2-\alpha_4^2}{\beta_{1} \alpha_4 }.
\end{eqnarray*}
By decomposing and eliminating variables as in Metaclass I, it follows
that that $D_{0}=0$ and $D_3 = 2$. Thus we have proven that $\alpha_1
= \alpha_2$ and $\beta_1=\beta_2$ whence $D_1 = D_2$ and equation
\eqref{eq:metaclassI-fDecomp} holds with $A=g_+^{-1}$, $B=g_-^{-1}$
and $C$ as above. Moreover, $g_\pm$ both have Lorentz signatures.
Condition $\alpha_3^2\neq \alpha_4^2$ follows since $\det
\kappa\vert_p \neq 0$.

\textbf{Metaclass V.}
If $\kappa\vert_p$ is in Metaclass V, 
there are coordinates $\{x^i\}_{i=0}^3$ around $p$ such that 
\begin{eqnarray*}
%\kappa\vert_p &\sim&
(\kappa_I^J)_{IJ} &=& 
\begin{pmatrix}
\rr_1 &       -\pos_1           &           0 & 0 & 0 & 0 \\
\pos_1      &    \rr_1           &           0&  0  & 0 & 0 \\ 
0              &                0     &     \rr_2 & 0 &0 &\alpha_3 \\ 
0              &                1    &           0 & \rr_1 & \pos_1 &0  \\ 
1              &                0    &          0 & -\pos_1  &\rr_1 & 0 \\ 
0              &                0    &  \alpha_3 & 0 & 0 &\rr_2 
\end{pmatrix}
\end{eqnarray*}
where $\alpha_1,\alpha_2, \alpha_3\in \setR$ and $\beta_1>0$.  We may assume that
$\alpha_3\not= 0$, since otherwise  $\operatorname{span}
\{dx^i\vert_p\}_{i=1}^3\subset \cF\vert_p$. Decomposing and
eliminating variables as in Metaclass I gives that
$\beta_1$ is purely complex. Thus $\kappa\vert_p$ can not be in Metaclass V.

\textbf{Metaclass VI.}
If $\kappa\vert_p$ is in Metaclass VI, there are coordinates $\{x^i\}_{i=0}^3$
around $p$ such that
\begin{eqnarray*}
%\kappa\vert_p&\sim &
(\kappa_I^J)_{IJ} &=& 
\begin{pmatrix}
\rr_1 & 0 & 0 & -\pos_1 & 0 &0 \\
0 & \rr_2 & 0 & 0  &\rr_4 & 0 \\
0 & 0 &  \rr_3 & 0 & 0 &\rr_5  \\
\pos_1 & 0 &  0 & \rr_1 & 0 &0  \\
0 & \rr_4 & 0 & 0  &\rr_2 & 0 \\
0 & 0 & \rr_5 & 0 & 0 &\rr_3  
\end{pmatrix}
\end{eqnarray*}
for some $\alpha_1, \ldots, \alpha_5\in \setR$ and $\beta_1>0$.  
We may assume that $\alpha_4$ and $\alpha_5$ are non-zero since
otherwise  $\operatorname{span} \{dx^i, dx^2\}\subset
\cF\vert_p$ for some $i\in \{0,1\}$ as in Metaclass IV.
Then the Tamm-Rubilar tensor density satisfies
\begin{eqnarray*}
C^{-1} \, \cG^{ijkl} \xi_i \xi_j \xi_k \xi_l 
 &=& \xi_0^4+\xi_1^4-\xi_2^4-\xi_3^4 + D_0 \xi_0 \xi_1 \xi_2 \xi_3 \\
 & & \quad + D_{1} (\xi_2^2 \xi_3^2- \xi_0^2 \xi_1^2)
- D_{2} (\xi_1^2 \xi_3^2+ \xi_0^2 \xi_2^2)
- D_{3} (\xi_1^2 \xi_2^2+ \xi_0^2 \xi_3^2), 
\end{eqnarray*}
where $C =  {\beta_1 \alpha_4 \alpha_5}$
and  $D_0, D_1, D_2, D_3\in \setR$ are defined in terms of $\alpha_i$ and $\beta_1$. 
By decomposing the Fresnel tensor as in equation \eqref{eq:metaclassI-fDecomp} and eliminating variables 
using a Gr\"obner basis, it follows that there exists a $\sigma\in \{\pm 1\}$ such that
\begin{eqnarray*}
  D_0 = 0, \quad D_1 = \sigma 2, \quad D_2 = -\sigma D_3,
\end{eqnarray*}
and moreover, equation \eqref{eq:metaclassI-fDecomp} holds for $A=g_+^{-1}$, $B=g_{-}^{-1}$ 
and $C$ as above, where
\begin{eqnarray*}
 g_\pm &=& 
\operatorname{diag}\left(1,
-\sigma, 
\frac 1 2\left(\sigma D_3 \pm \sqrt{D_3^2+4}\right),
\frac 1 2\left(-D_3 \mp \sigma \sqrt{D_3^2+4}\right)
\right)^{-1}.
\end{eqnarray*}
Since $g_+$ does not have a Lorentz signature for any $\sigma\in \{\pm
1\}$ and $D_3 \in \setR$, Proposition \ref{thm:localDescGc} and unique factorisation imply that
$\kappa\vert_p$ can not be in Metaclass VI.

\textbf{Metaclass VII.}  If $\kappa\vert_p$ is in Metaclass VII, there
are coordinates $\{x^i\}_{i=0}^3$ around $p$ such that
\begin{eqnarray*}
(\kappa_I^J)_{IJ} &=& 
\begin{pmatrix}
\rr_1 & 0 & 0 & \rr_4 & 0 &0 \\
0 & \rr_2 & 0 & 0  &\rr_5 & 0 \\
0 & 0 &  \rr_3 & 0 & 0 &\rr_6  \\
\rr_4 & 0 &  0 & \rr_1 & 0 &0  \\
0 & \rr_5 & 0 & 0  &\rr_2 & 0 \\
0 & 0 & \rr_6 & 0 & 0 &\rr_3 
\end{pmatrix}
\end{eqnarray*}
for some $\alpha_1, \ldots, \alpha_6\in \setR$.  We may assume that
$\alpha_4, \alpha_5, \alpha_6\neq 0$ since otherwise
$\operatorname{span}\{dx^i\vert_p, dx^j\vert_p\}\subset \cF\vert_p$
for some $i,j\in \{0,1,2\}$ as in Metaclass IV.  Then the Tamm-Rubilar tensor density 
satisfies
\begin{eqnarray*}
C^{-1} \, \cG^{ijkl} \xi_i \xi_j \xi_k \xi_l 
 &=& \xi_0^4 + \xi_1^4 + \xi_2^4 + \xi_3^4 
+ D_0 \xi_0 \xi_1 \xi_2 \xi_3 - 
\sum_{i=1}^3 D_i (\xi_{i'}^2 \xi_{i''}^2 + \xi_0^2 \xi_i^2), %\quad \xi\in \setR^4,
\end{eqnarray*}
where $C = {\alpha_4 \alpha_5\alpha_6}$, constants $D_1, D_2, D_3\in \setR$ are given by
\begin{eqnarray}
\label{eq:VII-def-D}
 D_1 &=& \frac{ (\alpha_{2}-\alpha_{3})^2- \alpha^2_{5}-\alpha^2_{6}}{ \alpha_{5}\alpha_{6}},\\
 D_2 &=& \frac{ (\alpha_{1}-\alpha_{3})^2- \alpha^2_{4}-\alpha^2_{6}}{ \alpha_{4}\alpha_{6}},\\
 D_3 &=& \frac{ (\alpha_{1}-\alpha_{2})^2- \alpha^2_{4}-\alpha^2_{5}}{ \alpha_{4}\alpha_{5}},
\end{eqnarray}
and $D_0\in \setR$ is given explicitly in terms of $\alpha_1, \ldots, \beta_3$,
and implicitly $D_0$ satisfies 
\begin{eqnarray} 
\label{eq:MetaclassVII_D0}
 D_0^2 = 4\left( -4+ D_1 D_2 D_3 + D_1^2+ D_2^2+ D_3^2\right).
\end{eqnarray}
Decomposing the Tamm-Rubilar tensor density as in equation
\eqref{eq:metaclassI-fDecomp} and eliminating variables using a
Gr\"obner basis, gives polynomial equations for $D_0, D_1, D_2,
D_3$.  
Let us consider the cases $D_0=0$ and $D_0\neq 0$ separately.
If $D_0 = 0$, there exists an $i\in \{1,2,3\}$ and a $\sigma\in
\{\pm 1\}$ such that
\begin{eqnarray}
\label{eq:VII-caseD00}
   D_0 = 0, \quad
   D_i = -\sigma 2, \quad
   D_{i'} = \sigma D_{i''},
\end{eqnarray}
where the last condition is a consequence of equation
\eqref{eq:MetaclassVII_D0}.  Suppose $i=1$. 
Then Proposition
\ref{thm:localDescGc} implies that for some invertible symmetric
matrices $A,B\in \setR^{4\times 4}$ with Lorentz signatures we have
\begin{eqnarray}
\label{eq:VII-ssss}
\left(  \xi^t \cdot A\cdot \xi\right)\left(  \xi^t \cdot B\cdot \xi\right)&=&
\left(  \xi^t \cdot L_+\cdot \xi\right)\left(  \xi^t \cdot L_-\cdot \xi\right), \quad \xi\in \setC^4,
\end{eqnarray}
where matrices $L_\pm  \in \setC^{4\times 4}$ are defined as
\begin{eqnarray}
\label{eq:gpmVII}
 L_\pm &=& 
    \operatorname{diag}\left(
        1, 
        \sigma,
        \frac 1 2\left(-D_2 \pm \sqrt{D_2^2-4}\right),
        \frac \sigma 2\left(-D_2 \pm \sqrt{D_2^2-4}\right)
    \right).
\end{eqnarray}
%Since $u\pm \sqrt{u^2-4}\neq 0$ for all $u\in \setC$, 
Since $L_\pm$ are invertible, equation \eqref{eq:VII-ssss}, Proposition
\ref{lemma:irr} and unique factorisation imply that $L_\pm$ are real
and have Lorentz signatures.  Thus $\vert D_2\vert\ge 2$ and $\det L_\pm <0$, but this
contradicts equation \eqref{eq:gpmVII}, which implies that
\begin{eqnarray*} 
  \det L_\pm = \frac 1 4 \left( -D_2 \pm \sqrt{D_2^2-4}\right)^2 >0.
\end{eqnarray*}
A similar analysis for $i=2,3$ shows that the case $D_0=0$ is not possible.
If $D_0\neq 0$ it follows that there exists $\sigma_1, \sigma_2,
\sigma_3 \in \{\pm 1\}$ and distinct $i,j, k\in \{1,2,3\}$ such that
\begin{eqnarray}
\label{eq:metaVII-D0n0}
  D_0 \neq 0, \quad
  D_i = \sigma_1 2, \quad  
  D_j = \sigma_2 2, \quad  
  D_k = \frac 1 2 \left( -4 \sigma_1 \sigma_2 + \sigma_3 D_0 \right),
\end{eqnarray}
where the last equation follows from equation
\eqref{eq:MetaclassVII_D0}.  If $(i,j)=(1,2)$ then $k=3$ and
Proposition \ref{thm:localDescGc} implies that for some invertible
symmetric matrices $A,B\in \setR^{4\times 4}$ with Lorentz signatures,
equation \eqref{eq:VII-ssss} holds for matrices $L_\pm \in
\setC^{4\times 4}$ defined as
\begin{eqnarray}
\label{eq:VII-gppz}
 L_\pm &=& \begin{pmatrix}
    1 & 0 & 0 & \pm \frac{\sqrt{D_0}}{\sqrt{8} \sqrt{\sigma_3}} \\
   0  & -\sigma_1 & \mp \frac{\sqrt{D_0} \sqrt{\sigma_3} }{\sqrt{8}} & 0 \\
   0  &  \mp \frac{\sqrt{D_0}\sqrt{\sigma_3}}{\sqrt{8} } & -\sigma_2 & 0 \\
\pm \frac{\sqrt{D_0}}{\sqrt{8} \sqrt{\sigma_3}} &  0 & 0 & \sigma_1 \sigma_2
\end{pmatrix}.
\end{eqnarray} 
Since both sides in equation \eqref{eq:VII-ssss} should decompose into
the same number of irreducible factors, it follows that $\xi^t\cdot
L_\pm \cdot \xi$ are irreducible in $\setC[\xi_0, \ldots,
\xi_3]$. Thus equation \eqref{eq:VII-ssss} and unique factorisation
imply that $L_\pm$ are real and have Lorentz signatures, so $\det
L_\pm <0$. However, this contradicts equation \eqref{eq:VII-gppz}
which implies that
\begin{eqnarray*} 
  \det L_\pm = \left( \frac 1 8 D_0 - \sigma_1 \sigma_2 \sigma_3 \right)^2 \ge 0.
\end{eqnarray*}
The cases $(i,j) = (1,3)$, $(2,3)$ are excluded by the same argument by using metrics
\begin{eqnarray*}
 L_\pm &=& \begin{pmatrix}
    1 & 0 & \pm \frac{\sqrt{D_0}\sqrt{\sigma_3}}{\sqrt{8} } & 0 \\
   0  & -\sigma_1 & 0 & \mp \frac{\sqrt{D_0}  }{\sqrt{8}\sqrt{\sigma_3}} \\
    \pm \frac{\sqrt{D_0}\sqrt{\sigma_3}}{\sqrt{8} } & 0 & \sigma_1 \sigma_2 & 0 \\
0 & \mp \frac{\sqrt{D_0}}{\sqrt{8}\sqrt{\sigma_3}}  & 0 & -\sigma_2
\end{pmatrix}, \\
 L_\pm &=& \begin{pmatrix} 
    1 & \pm \frac{\sqrt{D_0}}{\sqrt{8} \sqrt{\sigma_3}}  & 0 & 0 \\
\pm \frac{\sqrt{D_0}}{\sqrt{8} \sqrt{\sigma_3}}  & \sigma_1\sigma_2 & 0 & 0 \\
  0& 0 & -\sigma_1 &   \mp \frac{\sqrt{D_0}\sqrt{\sigma_3}}{\sqrt{8} } \\
0 & 0 & \mp \frac{\sqrt{D_0}\sqrt{\sigma_3}}{\sqrt{8} }   & -\sigma_2
\end{pmatrix}, 
\end{eqnarray*}
respectively. Thus $\kappa\vert_p$ can not be in metaclasses VII.

\textbf{Metaclasses VIII---XXIII.} 
(Following \cite[Lemma 5.1]{Schuller:2010}.)
 Let $A=(\kappa_I^J)_{IJ}$ be the
$6\times 6$ matrix that represents $\kappa\vert_p$ in some coordinates
$\{x^i\}_{i=0}^3$ around $p$. Then the Jordan normal form of $A$ has a
block of dimension $d\in \{2,\ldots, 6\}$ that corresponds to a real
eigenvalue $\lambda\in \setR\slaz$.  By considering unit vectors in the normal basis,
we can find non-zero $e_1, e_2\in \Lambda^2(N)\vert_p$ so that $\kappa(e_1) = \lambda
e_1$ and $\kappa(e_2) = \lambda e_2 + e_1$. Writing out
$\kappa(e_1)\wedge e_2 = e_1\wedge \kappa(e_2)$ implies that
$e_1\wedge e_1=0$, so $e_1 = \eta_1\wedge \eta_2$ for some linearly
independent $\eta_1,\eta_2\in \Lambda^1(N)\vert_p$
\cite[p. 184]{Cohn:2005}.  Let $W= \operatorname{span} \{ \eta_1,
\eta_2\}$. For all $\xi\in W$ we then have
\begin{eqnarray*}
W &\subset& \{ \alpha\in \Lambda^1(N)\vert_p : \xi\wedge \kappa(\xi\wedge \alpha)=0\},
\end{eqnarray*}
whence Theorem 3.3 in \cite{Dahl:2011:Closure} implies that $ W
\subset \mathscr{F}\vert_p $ and Proposition
\ref{prop:noTwoDimSubspace} implies that $\kappa\vert_p$ can not be in metaclasses
VIII-XXIII.
\end{proof}

In the proof of Theorem \ref{thm:factorMedium} the assumption that
$\kappa$ is invertible is only used to show that $\alpha_3^2\neq
\alpha_4^2$ in Metaclass IV and to exclude Metaclasses
VIII---XXIII. It would be interesting to see if these last metaclasses
can be excluded also for non-invertible $\kappa$ by other arguments.
Regarding this question it should be emphasized that here $\kappa$ is
real. For complex coefficients $\kappa$, the setting becomes more
involved.  For example, in \cite[Example 5.3]{Dahl:2011:Closure} it is
shows that for complex $\kappa$ the Fresnel surface can be a single
light cone even if $\kappa$ is not invertible.  Also, \emph{chiral
  medium} would be an example of a medium with complex coefficients in
$\kappa$ and with a double light cone.  In chiral medium right and
left hand circularly polarised plane waves propagate with different
wavespeeds.  Let us note that if we set $\xi_0=0$ in 
equation \eqref{eq:fresnel:metaI} we obtain the ternary quartic
studied in \cite{Thomsen:1916} and for this polynomial, $D_0^2$ in
equation \eqref{eq:MetaclassI_D0} is one of the factors in the
discriminant.

Let us make some comments regarding the three mediums derived
in Theorem \ref{thm:factorMedium}.  A first observation is that for
each Metaclass in Theorem \ref{thm:factorMedium}, the light cones are
parameterised by only one parameter: $D_{i'}=D_{i''}\ge 2$ in Metaclass I,
$\beta_1>0$ in Metaclass II, and $D_1\in \setR$ in Metaclass IV. Let
consider each metaclass under the assumption that Theorem
\ref{thm:factorMedium} holds.

\textbf{Metaclass I.}
In Metaclass I we assume that conditions \eqref{eq:mc-I-Dsol} holds for only one $i\in \{1,2,3\}$.
In terms of $\alpha_1, \ldots, \beta_3$ conditions \eqref{eq:mc-I-Dsol} 
are equivalent to
\begin{eqnarray*}
\label{eq:MetaclassI-cond}
  \alpha_{i'} = \alpha_{i''}, \quad \beta_{i'} = \beta_{i''}.
\end{eqnarray*}
When $\alpha_1 = \alpha_2 = \alpha_3 = 0$ this medium reduces to a uniaxial medium, where wave propagation
is well understood.

Let us also note that if $D_i=D_j=2$ for two distinct $i,j\in
\{1,2,3\}$, then $\alpha_1 = \alpha_2= \alpha_3$ and $\beta_1 =
\beta_2=\beta_3$. Then $\kappa\vert_p = -\beta_1 \ast_g + \alpha_1
\operatorname{Id}$ for the locally defined Lorentz metric $g =
\operatorname{diag}(-1,1,1,1)$ and the Fresnel surface is the single
light cone
\begin{eqnarray*}
\label{eq:FreEqN}
  \cF\vert_p &=& N_p(g).
\end{eqnarray*}

\textbf{Metaclass II.} 
For Metaclass II, the Fresnel polynomial
$\cG^{ijkl}\xi_i \xi_j \xi_k \xi_l$ is a function of $\xi_0, \xi_1^2+\xi_2^2, \xi_3$.
It is therefore motivated to project $\cF\vert_p$ onto $\xi_1=0$, and  we can plot $\cF\vert_p$
as a surface in $\setR^3$. Figure \ref{fig:ConesII} show this projection for 
three different values of $\beta_1$. From the figures (or from metrics $g_\pm$) we see that
the light cones $N(g_\pm)$ coincide in the limit $\beta_1\to \infty$.

\begin{center}
\begin{figure}[!ht]
\includegraphics[width=0.99\textwidth]{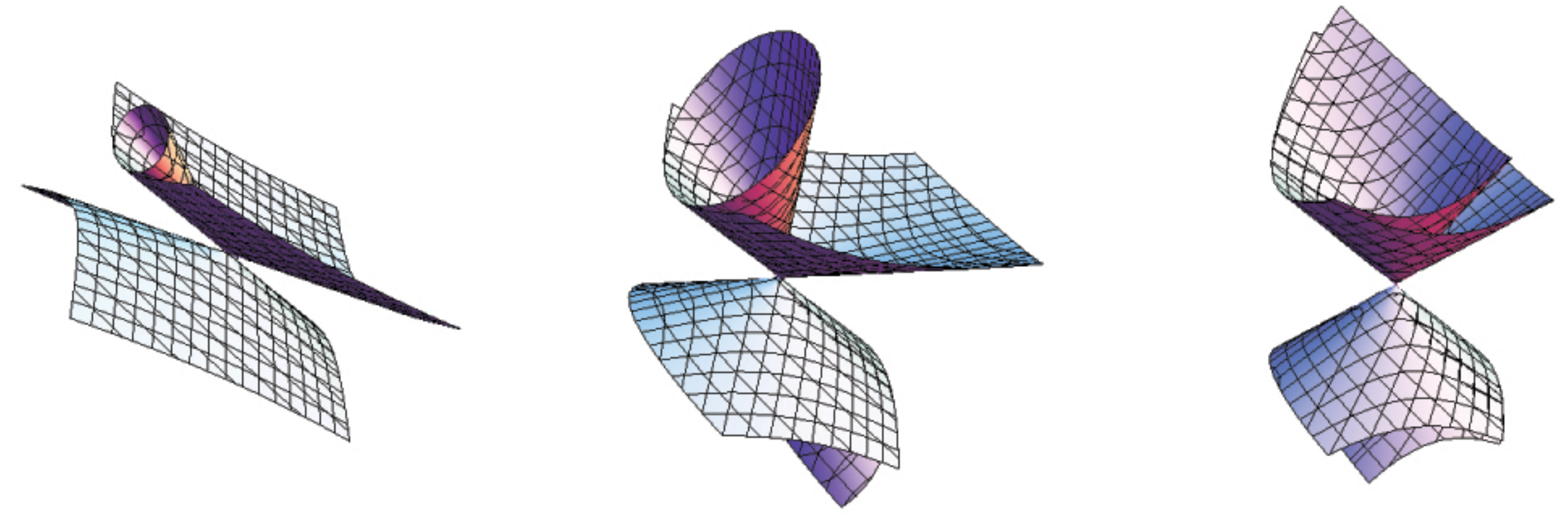}
\caption{
Projection into $\setR^3$ of Fresnel surfaces in Metaclass II
for $\beta_1 = 0.2$ (left), $\beta_1 = 0.8$ and $\beta_1 = 5$ (right).}
\label{fig:ConesII}
\end{figure}
\end{center}

\newcommand{\qw}[0]{w}

By a coordinate transformation we can put the local representation of $\kappa\vert_p$ into a 
more symmetric form.
Let  $\{\widetilde x^i\}_{i=0}^3$ be coordinates defined as
$\widetilde x^i = \sum_{j=0}^3 L_{ij} x^j$ where $L=(L_{ij})$ is the Jacobian matrix
\begin{eqnarray*}
L &=& \begin{pmatrix}
  0 & 0 & \frac{1}{2\beta_1}(1-\qw) & \frac{1}{2\beta_1}(1+\qw) \\
0 & 1 & 0 & 0\\
1 & 0 & 0 & 0\\
0& 0& 1 & 1
\end{pmatrix}^{-1},
\end{eqnarray*}
where $\qw = \sqrt{1+4\beta_1^2}$. The motivation for these coordinates is
that they diagonalize $g_+$. 
Then the $6\times 6$ matrix $(\widetilde \kappa_I^J)_{IJ}$ that
represents  $\kappa\vert_p$ in $\{\widetilde x^i\}_{i=0}^3$ coordinates is 
\begin{eqnarray*}
%(\widetilde \kappa_I^J)_{IJ} &=& 
\alpha_1 \operatorname{Id} + \frac 1 {\qw} \begin{pmatrix}
0 & 0 & 0 & \beta_1^2  & 0 &0 \\
0 & \beta_1 & -\beta_1 & 0  &\beta_1(-1+\qw) & -\beta_1 \\
0 & \beta_1 &  -\beta_1 & 0 & -\beta_1 &-\beta_1(1+\qw)  \\
-\qw^2 & 0 &  0 & 0 & 0 &0  \\
0 & -\beta_1(1+\qw) & \beta_1  & 0  &\beta_1 & \beta_1 \\
0 & \beta_1 & \beta_1(-1+\qw) & 0 & -\beta_1 &-\beta_1
\end{pmatrix}.
\end{eqnarray*}

\subsection{Metaclass IV}
For Metaclass IV, the Fresnel polynomial is also a function of 
 $\xi_0, \xi_1^2+\xi_2^2, \xi_3$. We may therefore visualize the 
Fresnel surface in the same way as in Metaclass II. See Figure \ref{fig:ConesIV}.
%shows the same projection as in Metaclass II for three different values of $D_1\in \setR$.

\begin{center}
\begin{figure}[!ht]
\includegraphics[width=0.99\textwidth]{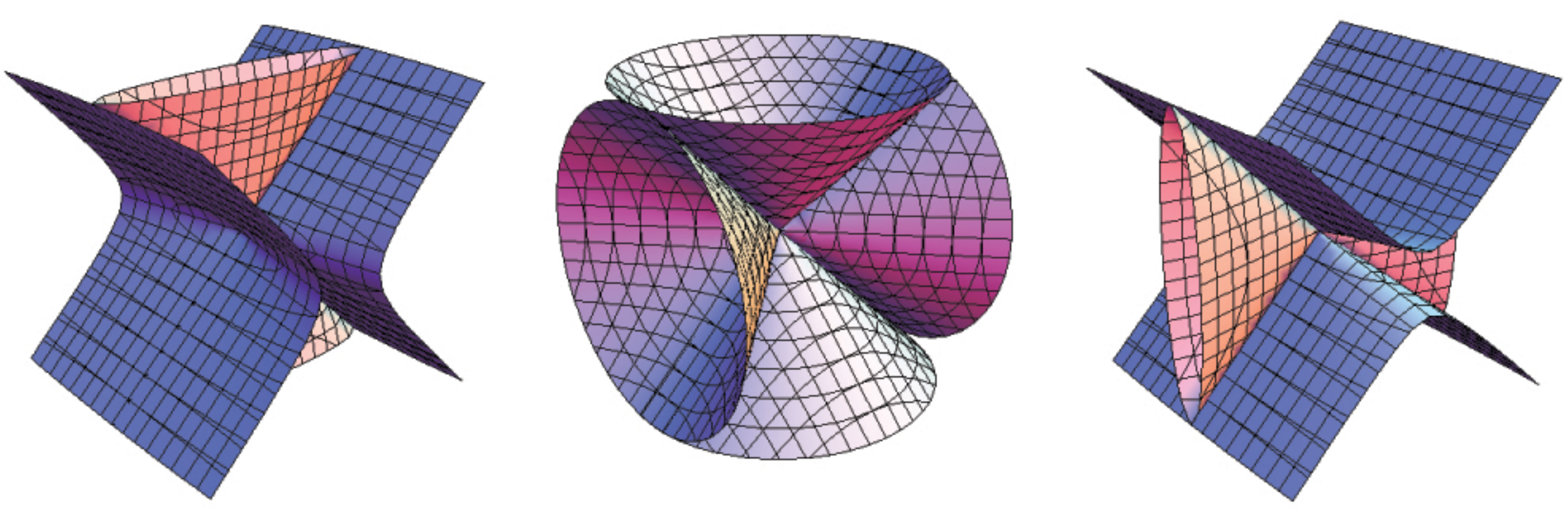}
\caption{
Projection into $\setR^3$ of Fresnel surfaces in Metaclass IV
 for $D_1 = -25$ (left), $D_1 = 0$ and $D_1 = 25$ (right).}
\label{fig:ConesIV}
\end{figure}
\end{center}

Suppose $\alpha_1=\alpha_2=\alpha_3=0$. If we treat $x^0$ as time and
$\{x^i\}_{i=1}^3$ as space coordinates, and write Maxwell's equations using vector fields 
$\mathbf{E}, \mathbf{D}, \mathbf{B}, \mathbf{H}$, then $\kappa\vert_p$ represents medium 
\begin{eqnarray*}
  \mathbf{D} &=& -\operatorname{diag}(\beta_1, \beta_1, \alpha_4) \cdot \mathbf{E}, \\
  \mathbf{B} &=& -\operatorname{diag}(\beta_1, \beta_1, -\alpha_4)^{-1} \cdot \mathbf{H}.
\end{eqnarray*}

For an electromagnetic medium to be physically relevant, the Fresnel
polynomial $\cG^{ijkl}\xi_i \xi_j \xi_k \xi_l$ should be
\emph{hyperbolic polynomial} \cite{Schuller:2010}.  This is a necessary
condition for Maxwell's equations to form a predictive theory, that
is, a necessary condition for Maxwell's equations to be solvable
forward in time.  The above 3D projections and the argument in
\cite{Schuller:2010} suggests that Metaclass II is hyperbolic for all
$\beta_1>0$ while Metaclass IV is never hyperbolic for any $D_1\in
\setR$. This is also supported by some  numerical tests.

\textbf{Acknowledgements.} 
The author gratefully appreciates financial
support by the Academy of Finland (project 13132527 and Centre of
Excellence in Inverse Problems Research), and by the Institute of
Mathematics at Aalto University.

%\bibliographystyle{amsalpha}
%\bibliography{../../references.bib} 

\begin{thebibliography}{SWW10}

\bibitem[Bri10]{Briand:2010}
E.~Briand, \emph{Covariants vanishing on totally decomposable forms}, Liaison,
  Schottky Problem and Invariant Theory (M.E. Alonso, E.~Arrondo,
  R.~Mallavibarrena, and I.~Sols, eds.), Progress in Mathematics, vol. 280,
  Birkh\"auser Basel, 2010, pp.~237--256.

\bibitem[BW80]{BornWolf}
M.~Born and E.~Wolf, \emph{Principles of optics}, Cambridge University Press,
  1980.

\bibitem[CLO92]{CoxLittleOShea:1992}
D.~Cox, J.~Little, and D.~O'Shea, \emph{Ideals, varieties, and algorithms},
  Springer, 1992.

\bibitem[Coh05]{Cohn:2005}
P.~M. Cohn, \emph{Basic algebra: {Groups, Rings, and Fields}}, Springer, 2005.

\bibitem[Dah06]{DahlPIER:2006}
M.~Dahl, \emph{Electromagnetic {Gaussian} beams using {Riemannian geometry}},
  Progress In Electromagnetics Research \textbf{60} (2006), 265--291.

\bibitem[Dah10]{Dahl:2009}
\bysame, \emph{Electromagnetic fields from contact- and symplectic geometry},
  preprint (2010).

\bibitem[Dah11a]{Dahl:2011:Closure}
\bysame, \emph{Determining electromagnetic medium from the {Fresnel surface}},
  arXiv: 1103.3118 (2011).

\bibitem[Dah11b]{Dahl:2011:Conjugation}
\bysame, \emph{A restatement of the normal form theorem for area metrics},
  preprint (2011).

\bibitem[DH80]{DruHath:1980}
I.T. Drummond and S.J. Hathrell, \emph{{QED} vacuum polarization in a
  background gravitational field and its effect on the velocity of photons},
  Physical Review D \textbf{22} (1980), no.~2, 343--355.

\bibitem[Ehr91]{Ehrlich:1991}
P.~Ehrlich, \emph{Null cones and pseudo-{Riemannian} metrics}, Semigroup forum
  \textbf{43} (1991), no.~1, 337--343.

\bibitem[FB11]{FavaroBergamin:2011}
A.~Favaro and L.~Bergamin, \emph{The non-birefringent limit of all linear,
  skewonless media and its unique light-cone structure}, Annalen der Physik
  \textbf{523} (2011), no.~5, 383--401.

\bibitem[HO03]{Obu:2003}
F.W. Hehl and Y.N. Obukhov, \emph{Foundations of classical electrodynamics:
  Charge, flux, and metric}, Progress in Mathematical Physics, Birkh\"auser,
  2003.

\bibitem[Iti05]{Itin:2005}
Y.~Itin, \emph{Nonbirefringence conditions for spacetime}, Physical Review D
  \textbf{72} (2005), no.~8, 087502.

\bibitem[Iti09]{Itin:2009}
\bysame, \emph{On light propagation in premetric electrodynamics: the covariant
  dispersion relation}, Journal of Physics A: Mathematical and Theoretical
  \textbf{42} (2009), 475402.

\bibitem[Kac04]{Kachalov:2004}
A.P. Kachalov, \emph{{Gaussian beams for Maxwell Equations on a manifold}},
  Journal of Mathematical Sciences \textbf{122} (2004), no.~5, 3485--3501.

\bibitem[LH04]{LamHeh:2004}
C.~L\"ammerzahl and F.W. Hehl, \emph{Riemannian light cone from vanishing
  birefringence in premetric vacuum electrodynamics}, Physical Review D
  \textbf{70} (2004), no.~10, 105022.

\bibitem[Mon07]{montaldi:2006}
J.~Montaldi, \emph{A note on the geometry of linear {Hamiltonian} systems of
  signature $0$ in $\mathbb{R}^4$}, Differential Geometry and its Applications
  \textbf{25} (2007), no.~3, 344--350.

\bibitem[OFR00]{ObuFukRub:00}
Y.N. Obukhov, T.~Fukui, and G.F. Rubilar, \emph{Wave propagation in linear
  electrodynamics}, Physical Review D \textbf{62} (2000), no.~4, 044050.

\bibitem[OH99]{ObukhovHehl:1999}
Y.N. Obukhov and F.W. Hehl, \emph{Spacetime metric from linear
  electrodynamics}, Physics Letters B \textbf{458} (1999), no.~4, 466--470.

\bibitem[OR02]{ObuRub:2002}
Y.N. Obukhov and G.F. Rubilar, \emph{Fresnel analysis of wave propagation in
  nonlinear electrodynamics}, Physical Review D \textbf{66} (2002), no.~2,
  024042.

\bibitem[PSW07]{PSW_JHEP:2007}
R.~Punzi, F.P. Schuller, and M.N.R. Wohlfarth, \emph{Area metric gravity and
  accelerating cosmology}, Journal of High Energy Physics 02 \textbf{030}
  (2007).

\bibitem[PSW09]{PunziEtAl:2009}
\bysame, \emph{Propagation of light in area metric backgrounds}, Classical and
  Quantum Gravity \textbf{26} (2009), 035024.

\bibitem[RS11]{RiveraSchuller:2011}
S.~Rivera and F.P. Schuller, \emph{Quantization of general linear
  electrodynamics}, Physical Review D \textbf{83} (2011), no.~6, 064036.

\bibitem[Rub02]{Rubilar2002}
G.F. Rubilar, \emph{Linear pre-metric electrodynamics and deduction of the
  light cone}, Annalen der Physik \textbf{11} (2002), no.~10--11, 717--782.

\bibitem[SW06]{SchullerWohlfarth:2006}
F.P. Schuller and M.N.R. Wohlfarth, \emph{Geometry of manifolds with area
  metric: {Multi-metric backgrounds}}, Nuclear physics B \textbf{747} (2006),
  no.~3, 398--422.

\bibitem[SWW10]{Schuller:2010}
F.P. Schuller, C.~Witte, and M.N.R. Wohlfarth, \emph{Causal structure and
  algebraic classification of non-dissipative linear optical media}, Annals of
  Physics \textbf{325} (2010), no.~9, 1853--1883.

\bibitem[Tho16]{Thomsen:1916}
H.I. Thomsen, \emph{Some invariants of the ternary quartic}, American Journal
  of Mathematics \textbf{38} (1916), no.~3, 249--258.

\end{thebibliography}
\providecommand{\bysame}{\leavevmode\hbox to3em{\hrulefill}\thinspace}
\providecommand{\MR}{\relax\ifhmode\unskip\space\fi MR }
% \MRhref is called by the amsart/book/proc definition of \MR.
\providecommand{\MRhref}[2]{%
  \href{http://www.ams.org/mathscinet-getitem?mr=#1}{#2}
}
\providecommand{\href}[2]{#2}

\end{document}